\newif\ifreport\reportfalse
\def\ps@headings{%
\def\@oddhead{\mbox{}\scriptsize\rightmark \hfil \thepage}%
\def\@evenhead{\scriptsize\thepage \hfil \leftmark\mbox{}}%
\def\@oddfoot{}%
\def\@evenfoot{}}
\newtheorem{lemma}{Lemma}
\newtheorem{Theorem}{Theorem}
\renewcommand{\maketag@@@}[1]{\hbox{\m@th\normalsize\normalfont#1}}%
\DeclareMathOperator*{\argmax}{argmax}
\newcommand{\urltilde}{\kern -.15em\lower .7ex\hbox{~}\kern .04em}
\begin{document}
\title{Network Control without CSI using Rateless Codes for Downlink Cellular Systems}

\author{Yin Sun$^\dag$, C. Emre Koksal$^\dag$, Sung-Ju Lee$^\ddag$, and Ness B. Shroff$^\dag$ \\
$^\dag$Dept. of ECE, The Ohio State University, Columbus, OH\\
$^\ddag$Hewlett Packard Laboratories, Palo Alto, CA
\thanks{This work has been supported in part by an IRP grant from HP.}
}

\maketitle
\begin{abstract}

Wireless network scheduling and control techniques (e.g.,
opportunistic scheduling) rely heavily on access to Channel State
Information (CSI). However, obtaining this information is costly in
terms of bandwidth, time, and power, and could result in large
overhead. Therefore, a critical question is {\em how to optimally
manage network resources in the absence of such information}. To
that end, we develop a cross-layer solution for downlink cellular
systems with imperfect (and possibly no) CSI at the transmitter. We
use rateless codes to resolve channel uncertainty. To keep the
decoding complexity low, we explicitly incorporate time-average
block-size constraints, and aim to maximize the system utility. The
block-size of a rateless code is determined by both the network
control decisions and the unknown CSI of many time slots.
Therefore, unlike standard utility maximization problems, this
problem can be viewed as a constrained partial observed Markov
decision problem (CPOMDP), which is known to be hard due to the
``curse of dimensionality.'' However, by using a modified
Lyapunov drift method, we develop a dynamic network control scheme,
which yields a total network utility within $O(1/L_{av})$ of
utility-optimal point achieved by infinite block-size channel codes,
where $L_{av}$ is the enforced value of the time-average block-size of rateless codes. This opens the door of being able to trade complexity/delay for performance
gains in the absence of accurate CSI. Our simulation results show that
the proposed scheme improves the network throughput by up to 68\%
over schemes that use fixed-rate codes.
\end{abstract}

\ifreport
\newpage
\fi
\section{Introduction}\label{sec:intro}

Over the past decade wireless scheduling and control techniques
(e.g., opportunistic scheduling) have been developed to exploit
opportunistic gains under the assumption of accurate channel state
information (CSI) \cite{Eryilmaz05,Neely05,Lin06,Neely08}. However, obtaining this information is costly in terms of bandwidth, time, and power, and could result in
incurring large overhead. Therefore, a critical question is ``how to
optimally manage network resources in the absence of such
information?'' We aim to answer this question by using
\emph{rateless codes} to jointly control power allocation,
scheduling, and channel coding for downlink cellular systems with
imperfect (and possibly no) CSI at the transmitter.

Rateless codes are a class of channel codes that the codewords (i.e., sequences of coded symbols or packets) with
higher code-rates are prefixes of lower-rate codes.
The transmitter progressively sends the coded packets to the receiver,
until the receiver successfully decodes the message and sends an acknowledgment (ACK) to the transmitter.
These codes are ``regret-free'' in the sense that the transmitter never worries about that the selected
modulation and code-rate are inappropriate such that the receiver cannot decode the message. Therefore, rateless codes
work well when the CSI is not available at the transmitter due to limited-feedback and/or interference \cite{Gudipati2011}.

The first practical realizations of low complexity rateless codes
are Luby Transform (LT) codes \cite{Luby2001} and Raptor codes \cite{Shokrollahi2006}
for erasure channels, which have been widely used for application layer forward error correction (FEC). In the physical layer, Raptor codes for binary symmetric channel (BSC) and Gaussian channels were constructed in
\cite{Etesami06,Zhong2009,Bonello2011}, where
belief propagation (BP) decoding algorithms were utilized to realize a
near-capacity performance over a wide range of SNR.
The complexity of the BP decoding algorithms increases
linearly with the number of coded packets (block-size) of rateless codes \cite{Etesami06}.
Rateless codes that simultaneously achieve the
capacity of Gaussian channels at multiple SNRs
were developed in \cite{Erez2012,Gudipati2011}. They use
a layered encoding and successive decoding approach to achieve linear decoding complexity. Recently,
a new type of rateless codes, called spinal codes, have been proposed
\cite{perry2012spinal}, which use an approximate
maximum-likelihood (ML) decoding algorithm to achieve the Shannon
capacity of both BSC and Gaussian channels. The complexity of this decoding
algorithm is polynomial in the size of message bits, but is still
exponential in the block-size~\cite{Balakrishnan2012}.\footnote{In
\cite{Balakrishnan2012}, the rate gap $\varepsilon$ from the
capacity is inversely proportional to the block-size $L$, and the
decoding complexity is exponential in $1/\varepsilon$.}

In rateless codes, if the block-sizes are allowed to be arbitrarily
large, the achievable rate will gradually approach the ergodic
capacity of the channel, at the expense of unbounded
decoding time. However, in practice, one
cannot use rateless codes with arbitrarily large block-sizes so as to
maintain manageable decoding time and complexity. Therefore, the
block-size of rateless codes can be viewed as a parameter to control
the throughput-complexity tradeoff.

We investigate the cross-layer design of downlink cellular
systems with imperfect (possibly no) CSI at the transmitter that
employ rateless codes to resolve channel uncertainty. Most of the prior work on cross-layer network control with
imperfect CSI was centered on fixed-rate codes,
e.g.,~\cite{Wenzhuo12,Aggarwal12}, which can achieve Shannon
capacity for a certain channel state. However, these schemes suffer
from channel outages or inefficient use of available channel rates,
since the CSI information is not perfectly known at the transmitter. In
contrast, rateless codes combat these issues by choosing their
decoding time on the fly, at the expense of additional decoding
complexity. Recently, scheduling and routing policies based on rateless
codes were proposed in \cite{Urgaonkar2011,Yang12} for time-invariant channel environments. Dynamic scheduling for incremental redundancy HARQ was
analyzed in \cite{Mehr2011} for fading channels, which still requires feeding back ACK and realized mutual information in each slot to update the transmitter queue.

We explicitly incorporate time-average block-size constraints to keep the decoding complexity low and maximize the system utility.
This utility maximization problem is challenging for two reasons: 1)
the block-size of a rateless code is affected by the network control
decisions of many time slots; and 2) the system is only partially observed because the accurate CSI is not available. Therefore, unlike
standard utility maximization
problems~\cite{Eryilmaz05,Lin06,Stolyar2005,Neelybook10}, this
problem is a constrained partial observed Markov decision problem
(CPOMDP), which is generally intractable due to the ``curse of
dimensionality.'' To that end, the following are the intellectual
contributions of our paper:
\vspace{-0.1cm}
\begin{itemize}
\item
We formulate and solve a new utility maximization problem for
downlink cellular systems, which utilizes rateless codes to
resolve channel uncertainty. We develop a low-complexity dynamic
network control scheme to attain a near-optimal solution to this
problem. By varying the power allocation and scheduling decisions
dynamically in each slot, our control scheme exploits the imperfect
CSI and realizes a multi-user diversity gain.
Our simulation results show that our scheme, by avoiding channel outages and utilizing the full channel rate more efficiently,
improves the network throughput by up to 68\%,
compared with the schemes based on fixed-rate codes. To the best of our knowledge,
{\em this is the first cross-layer network control scheme for physical layer rateless codes over time-varying noisy channels that does not require accurate CSI information}.
\item
One of our key technical contributions is in showing
that our network control scheme meets the time-average
block-size constraint of rateless codes. In doing so, we prove that
the second order moment of the block-size of rateless codes is
finite. This is accomplished by establishing a large-deviation
principle for the reception process of rateless codes, which is
difficult because the underlying Markov chain of our scheme has an
uncountable state space.
\item
Another technical contribution is in developing a modified Lyapunov drift method to analyze the
performance of our network control scheme. Conventional Lyapunov
drift methods require minimizing the drift-plus-penalty of the
system in each slot. However, our network control scheme generates
an approximate drift-plus-penalty solution for only a portion of
time slots. Nevertheless, we show that our scheme deviates from the
time-average optimal utility of infinite block-size channel codes by
no more than $O(1/L_{av})$, where $L_{av}$ is the enforced value of the time-average
block-size of rateless codes. This opens the door of being able to trade complexity/delay for performance
gains. Moreover, the feedback overhead of our scheme is at most $1/L_{av}$ of that for fixed-rate codes when no CSI is available at the transmitter.
\end{itemize}


\section{Problem Formulation}\label{sec2}
We consider a time-slotted downlink cellular network with one transmitter and $S$ receivers.
%
%
The channels are assumed to be block fading with a constant channel state within each slot, and vary
from one slot to another. The channel states of slot
$t$ are described as $\textbf{h}[t]=(h_1[t],\cdots,h_S[t])$. Each receiver has perfect knowledge of its own CSI via channel estimation. However, the transmitter only has access to an imperfect CSI $\hat{\textbf{h}}[t]=(\hat{h}_1[t],\cdots,\hat{h}_S[t])$ due to channel fluctuation and limited feedback.
We assume that $\{{h}_s[t],\hat{{h}}_s[t]\}$ are \emph{i.i.d.} across time and independent across receivers, and the conditional probability distribution $f({h}_s|\hat{{h}}_s)$ of ${h}_s[t]$ based on $\hat{{h}}_s[t]$ is available at the transmitter.
This model has covered the special cases of no CSI feedback, i.e., $\hat{{h}}_s[t]$ is independent of ${h}_s[t]$, and perfect CSI feedback, i.e., $\hat{{h}}_s[t]={h}_s[t]$.

Let $P[t]$ denote the transmission power in slot $t$. The downlink transmissions are subject to a peak power constraint
\begin{equation}\label{eq31}
0\leq P[t]\leq P_{peak},
\end{equation}
for all $t$ and a time-average power constraint
\begin{equation}\label{eq23}
\limsup_{T\rightarrow\infty}\frac{1}{T}\sum_{t=0}^{T-1}P[t]\leq P_{av}.
\end{equation}

The mutual information accumulated at receiver $s$ is denoted by
$I(h_s,P)$.
We assume that $I(h_s,P)$ is a
non-decreasing and concave function of $P$. Moreover, there exist some
finite $I_{\max}, C>0$ such that
\begin{eqnarray}\label{eq1}
&&I(h_s,0)=0,~I(h_s,P_{peak})\leq I_{\max},\ \ \text{(w.p.1)}\\
&&\frac{\partial E_{\textbf{h}}\{I(h_s,P)|\hat{h}_s\}}{\partial P}\bigg|_{P=0}\leq C,~\forall\hat{h}_s\label{eq114}
\end{eqnarray}
where w.p.1 stands for ``with probability 1'', the expectation $E_{\textbf{h}}$ is taken over the channel state
$\textbf{h}$ and the upper bound $I_{\max}$ is due to the limited
dynamic range of practical RF receivers.
\subsection{Rateless Codes and Transceiver Queues}
We consider a general model for rateless codes proposed in \cite{Draper09,Etesami06}.
At the transmitter, the encoder generates an unlimited amount of coded packets for receiver $s$ from a payload message with $M_s$ bits of information. One coded packet is transmitted in each slot to a scheduled receiver. The coded packets of one receiver may be transmitted over non-sequential time slots due to user scheduling. Receiver $s$ collects packets until its accumulated mutual information exceeds the threshold $M_s(1+\epsilon)$, which $\epsilon$ is an appropriate constant, called reception overhead \cite{Etesami06}. The value of $\epsilon$ is chosen such that the decoder can decode the message with high probability. For Raptor codes \cite{Etesami06} and Strider \cite{Gudipati2011} over Gaussian channels, $\epsilon$ is nonzero for certain ranges of channel SNR. For spinal codes \cite{perry2012spinal,Balakrishnan2012} over BSC and Gaussian channels, $\epsilon$ can be arbitrarily close to 0 by choosing the code parameter properly.

%
%
%

\subsubsection{Decoder Queues}
Each receiver maintains a decoder
queue $R_s[t]$, which represents the amount of mutual information required for decoding the
current message. Once $R_s[t]$ becomes smaller than or equal to
$I(h_s[t],P[t])K$, receiver $s$ can decode the current message at the
end of slot $t$. Let $n_s[t]$
denote the index of the current rateless code of receiver $s$, and
$c[t]$ denote the scheduled receiver in slot $t$. The evolution of
the decoder queue $R_s[t]$ is determined by
{\small\begin{equation}\label{eq82}
\!\!R_s[t+1]\! =\!\left\{\!\!\!\begin{array}{l l} R_s[t],& \textrm{if}~c[t]\neq s,\\
R_s[t]\!-\!I(h_s[t],P[t])K,\!\!\!\!\!& \textrm{if}~c[t]=s~\textrm{and}\\ & ~R_s[t]> I(h_s[t],P[t])K,\!\!\\
(1+\epsilon)M_s[n_s[t]+1], & \textrm{if}~c[t]=s~\textrm{and}\\ & ~R_s[t]\leq I(h_s[t],P[t])K,\!\!
\end{array}\right.
\end{equation}}
$\!\!\!$where $M_s[n]$ is the size of the message bits for the $n$th rateless
code of receiver $s$, $K$ is the number of symbols in each packet.
For notational
simplicity, we omit $\epsilon$ in the rest of the paper.
Nevertheless,
one can multiply $I(h_s,P)$ by $1/(1+\epsilon)$ to derive the results
for non-zero $\epsilon$.

\subsubsection{Encoder Queues}
Since the transmitter has no access to the decoder queue $R_s[t]$,
it updates the encoder queue $Q_s[t]$ only based on the ACK events.
Let us define an ACK variable $a[t]$: if $c[t]=s$ and $R_s[t]\leq
I(h_s[t],P[t])K$, receiver $s$ can decode the current rateless code and
send an ACK to the transmitter, hence $a[t]=s$; if the transmitter
receives no ACK in slot $t$, then $a[t]=0$.  Hence, the evolutions
of the encoder queue $Q_s[t]$ are given by
\begin{equation}\label{eq85}
Q_s[t+1]=(Q_s[t]-M_s[n_s[t]]1_{\{a[t] =s \}})^++x_s[t],
\end{equation}
where $1_{\{A\}}$ is the indicator function of some event $A$,
$(\cdot)^+=\max\{\cdot,0\}$, and $x_s[t]$ is the arrival rate of the
encoder queue. We assume that the arrival
rate $x_s[t]$ is bounded by
\begin{equation}\label{eq81}
0\leq x_s[t]\leq D_s.
\end{equation}
The code index $n_s[t]$, which is available to both the transmitter and receiver, evolves as
\begin{equation}\label{eq56}
n_s[t+1] = n_s[t] +1_{\{a[t] =s \}}.
\end{equation}
\subsection{Decoding Complexity Control}
Define $t_{n,s}=\min\{t\geq0:{n}_s[t]=n\}$ as the time slot that the first packet of the $n$th rateless code for receiver $s$ is transmitted. From \eqref{eq82}, the block-size of the $n$th rateless code for receiver $s$ turns out to be:
{\small\begin{eqnarray}\label{eq83}
&&\!\!\!\!\!\!\!\!\!\!\!\!\!\!L_s[n]=\! \min\!\left\{\sum_{t=t_{n,s}}^{t_{n,s}+l-1}1_{\{c[t]=s\}}\right.: \nonumber\\
&&\!\!\!\!\!\!\!\!\!\!\!\!\!\!~~~~\left. \!M_s[n]\!\leq\sum_{t=t_{n,s}}^{t_{n,s}+l-1} \!\! 1_{\{c[t]=s\}} I(h_s[t],P[t])K,~l\geq1\right\}\!,
\end{eqnarray}}
$\!\!\!$which is the number of scheduled time slots for providing the amount of mutual information no smaller than $M_s[n]$ bits.

As discussed in Section~\ref{sec:intro}, the block-size
$L_s[n]$ has a significant influence on the decoding time of rateless codes over
time-varying noisy channels. Thus, it is important that we limit $L_s[n]$ so as
to maintain an acceptable decoding complexity. However, $L_s[n]$ in
\eqref{eq83} cannot be specified before transmission, because the
channel states of future slots are not available. In particular, the set of possible values for $L_s[n]$ may have an infinite span depending on the stochastic model of the wireless channel states.
Hence, in order to avoid the undesirably long block-sizes and  effectively control the decoding complexity, we
consider the following time-average block-size constraints
\begin{equation}\label{eq112}
\lim_{N\rightarrow\infty}\frac{1}{N}\sum_{n=1}^{N}L_s[n]= L_{av},
\end{equation}
for $L_{av}\geq1$ and all $s\in\{1,\cdots,S\}$.
\subsection{Utility Maximization Problem}
Define
$\overline{x}_s=\liminf_{T\rightarrow\infty}\frac{1}{T}\sum_{n=0}^{T-1}
x_s[t]$ as the time-average rate that data arrives at the encoder queue of receiver $s$.  Each receiver
is associated with a utility function $U_s(\overline{x}_s)$, which
represents the ``satisfaction'' of receiving data at an average rate
of $\overline{x}_s$ bits/packet. We assume that $U_s(\cdot)$ is a
concave, non-decreasing, continuous differentiable function, which
satisfies $U_s(0)=0$ and $U_s^\prime(0)=b_s<\infty$.

Our goal is to solve
{\small\begin{eqnarray}\label{eq141}
\!\!\max_{\overline{x}_s} && \sum_{s=1}^S U_s(\overline{x}_s)\\
\textrm{s.t.}~ && \overline{\textbf{x}}\in \Lambda,\nonumber
\end{eqnarray}}
$\!\!\!$where $\overline{\textbf{x}}= (\overline{x}_1,\cdots,\overline{x}_S)$, and $\Lambda$ is the time-average rate region such that there exists a network control scheme
$\{x_s[t],c[t],P[t],{M}_s[n]\}$ which satisfies \eqref{eq31}, \eqref{eq23}, \eqref{eq82}-\eqref{eq112}, and the queues $Q_s[t]$ are rate stable, i.e., \cite{Neelybook10}
\begin{equation}\label{eq:stable_constraint}
\lim_{t\rightarrow0}\frac{Q_s[t]}{t}= 0.~~ \textrm{(w.p.1)}
\end{equation}

The aforementioned utility maximization problem \eqref{eq141} is
challenging for two reasons: 1) the block-size of rateless codes
$L_s[n]$ in \eqref{eq83} is affected by the network control
decisions $\{c[t],P[t],{M}_s[n]\}$ of many time slots; and 2) the system is only partially observed because the accurate CSI $h_s[t]$ is not available. Therefore, the problem \eqref{eq141} belongs to the class
of constrained partially observed Markov decision problems (CPOMDP),
which are known to be inherently hard. However, we are able to develop a dynamic network
control scheme, described next, to obtain an efficient solution to this problem.



\section{Cross-layer Network Control} \label{cross}

We develop a dynamic network control scheme to solve the utility
maximization problem \eqref{eq141}. We show that our scheme deviates
from the optimal utility of infinite block-size channel codes by no
more than $O(1/L_{av})$, while still ensuring that the time-average
block-size of rateless codes is equal to $L_{av}$.
\subsection{Network Control Algorithm}
We first define virtual queues for the time-average constraints \eqref{eq23} and \eqref{eq112}, i.e.,
\begin{eqnarray}\label{eq86}
&&Z[t+1] = (Z[t]-P_{av})^++ P[t],\\\label{eq71}
&&W_s[n+1]= W_s[n] +L_s[n]-L_{av}.
\end{eqnarray}
Since the block-size $L_s[n]$ in \eqref{eq83} is affected by the
network control decisions $\{c[t],P[t],{M}_s[n]\}$ and the unknown CSI of many time slots,
conventional Lyapunov drift methods
for enforcing the time-average block-size constraints \eqref{eq112}
will result in solving a difficult partially observable Markov decision
problem.

Rather, we develop a low-complexity encoding control method that
increases the message size of rateless codes $M_s[n]$, if
$W_s[n]\geq0$; and decreases $M_s[n]$, if $W_s[n]<0$. The network
control scheme $\{x_s[t],P[t],c[t],M_s[n]\}$ is determined by the
following algorithm:

\underline{\textbf{Network Control Algorithm (NCA)}:}
\begin{itemize}
\item {\bf Encoding control:}
The message size ${M}_s[n]$ is given by:
\begin{eqnarray} \label{eq87}
\!\!\!\!\!\!\!\!\!M_s[n\!+\!1]\!=\!\left\{\!\!\begin{array}{l l}
({M}_s[n]-\delta)^+, &\!\!\!\!\!\!\!\!\!\!\!\!\!\!\!\!\!\!\!\!\!\!\! \textrm{if}~ W_s[n]\geq0,\\
\min\{{M}_s[n]+\delta,{M}_{\max}\},\!\!\!\! & \\
&\!\!\!\!\!\!\!\!\!\!\!\!\!\!\!\!\!\!\!\!\!\!\!\textrm{if}~ W_s[n]<0,
\end{array}\right.  \!\!\!\!\!\!\!\!\!\!\!\!\!\!\!\!\!\!
\end{eqnarray}
where $\delta>0$ and ${M}_{\max}=I_{\max}L_{av}K$ are algorithm parameters.

\item {\bf Power allocation and scheduling:}

Find the receiver $\varsigma[t]$ that satisfies
\begin{eqnarray}
\label{eq96}
&&\!\!\!\!\!\!\!\!\!\!\!\!\varsigma[t] = \argmax_{s\in\{1,\ldots,S\}} Q_s[t] E_{\textbf{h}}\{I(h_s[t],P_s[t])|\hat{h}_s[t]\}K  \nonumber\\
&&~~~~~~~~~~~~~~~~~~~~~~~~~~~~~~~~~~~~~~-Z[t]P_s[t],~~~~~~
\end{eqnarray}
where $P_s[t]$ is determined by
\begin{equation}
\label{eq78}
P_s[t]=\argmax_{P\in[0,P_{peak}]}Q_s[t]E_{\textbf{h}}\{I(h_s[t],P)|\hat{h}_s[t]\}K - Z[t]P.
\end{equation}

The power allocation and scheduling scheme is described as follows: If the
transmission power $P_{\varsigma[t]}[t]$ is within a small
neighbourhood of zero, i.e.,
$P_{\varsigma[t]}[t]\in[0,\varepsilon)$, no receiver is scheduled
and $c[t]=P[t]=0$. Otherwise, if
$P_{\varsigma[t]}[t]\geq\varepsilon$, receiver $\varsigma[t]$ is
scheduled, i.e., $c[t]=\varsigma[t]$ and $P[t]=P_{\varsigma[t]}[t]$.
Here, $\varepsilon>0$ is a very small constant parameter.

\item {\bf Rate control:}
The arrival rate of the encoder queue is determined by
\begin{eqnarray}\label{eq72}
x_s[t]=\argmax_{x\in[0,D_s]}VU_s(x)-x^2-2Q_s[t]x,
\end{eqnarray}
where $V>0$ is a constant algorithm parameter.

\item {\bf Queue update:}
Update the queues $R_s[t]$, $Q_s[t],$ $Z[t]$, and $W_s[n]$ according to \eqref{eq82}, \eqref{eq85}, \eqref{eq86}, and \eqref{eq71}, respectively.
\end{itemize}
In Algorithm NCA, we introduced a transmission power lower bound
$P[t]\geq\varepsilon$, where $\varepsilon>0$ is an arbitrary small
constant. This additional power lower bound is useful for
establishing the stability of $W_s[n]$ in Section \ref{sec:analysis}. The
impact of this power lower bound becomes negligible as $\varepsilon$
tends to $0^+$.
\subsection{Performance Analysis}\label{sec:analysis}
We analyze the performance of Algorithm NCA in two steps: In
\emph{Step One}, we show that the virtual queue $W_s[n]$ is rate
stable, and thereby the time-average block-size constraint
\eqref{eq112} is satisfied with probability 1. In \emph{Step Two}, we
show that the performance of our scheme deviates from the optimal utility by no more than $O(1/L_{av})$.
\subsubsection{Step One}
The key idea for proving the stability of $W_s[n]$ is to show that the second order moment of the block-size $E\{L_s[n]^2\}$ is upper bounded uniformly for all $n$ and $s$, which is stated in the following lemma:

\begin{lemma}\label{lem4}
Let $\{P[t],x_s[t],$ $c[t],{M}_s[n]\}$ be determined by Algorithm
NCA. There then exists some $G>0$ such that
\begin{eqnarray}\label{eq145}
E\{L_s[n]^2\}\leq G,~\forall n,s.
\end{eqnarray}
\end{lemma}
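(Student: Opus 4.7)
The plan is to derive a uniformly exponential tail bound on $L_s[n]$ and integrate to obtain the second moment. First, observe that the encoding control rule \eqref{eq87} caps $M_s[n]\leq M_{\max}:=I_{\max}L_{av}K$ for every $n$. Letting $T_1<T_2<\cdots$ be the successive slots in which $c[t]=s$ during the $n$th code, and $I_k:=I(h_s[T_k],P[T_k])K\in[0,I_{\max}K]$, definition \eqref{eq83} gives the first-passage representation
\begin{equation*}
L_s[n]=\inf\Bigl\{l\geq1:\sum_{k=1}^l I_k\geq M_s[n]\Bigr\}.
\end{equation*}
Hence what is needed is a hitting-time estimate for a bounded-increment process whose target is itself bounded by $M_{\max}$.

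The central step is a uniform positive-drift estimate: exhibit $\mu>0$, independent of $(n,s)$ and of the history, such that, letting $\mathcal{G}_k$ denote the information available immediately after the scheduling decision in slot $T_k$ (including $\hat{h}_s[T_k]$ but not $h_s[T_k]$),
\begin{equation*}
E\{I_k\mid\mathcal{G}_k\}\geq\mu\quad\text{a.s.\ on }\{c[T_k]=s\}.
\end{equation*}
The structural inputs I would combine are (i) the power floor $P[T_k]\geq\varepsilon$ enforced by Algorithm NCA, (ii) concavity and nonnegativity of $\hat{I}(P,\hat{h}_s):=E\{I(h_s,P)\mid\hat{h}_s\}$ together with the slope bound \eqref{eq114}, and (iii) the KKT optimality condition for \eqref{eq78}. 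Nonnegativity of the scheduling objective at its maximizer yields the state-dependent bound $\hat{I}(\varepsilon,\hat{h}_s[T_k])\geq\varepsilon Z[T_k]/(Q_s[T_k]K)$; a case analysis using \eqref{eq114} in the regime where $Z[T_k]/Q_s[T_k]$ is small is needed to upgrade this to the universal constant $\mu$.

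With the drift secured, I form the bounded-increment martingale $S_l:=\sum_{k=1}^l(I_k-E\{I_k\mid\mathcal{G}_k\})$, whose increments lie in $[-I_{\max}K,I_{\max}K]$. Azuma--Hoeffding then gives, for every $l\geq 2M_{\max}/\mu$,
\begin{equation*}
P(L_s[n]>l)=P\Bigl(\sum_{k=1}^l I_k<M_s[n]\Bigr)\leq P\bigl(S_l\leq -l\mu/2\bigr)\leq e^{-cl},
\end{equation*}
with $c=\mu^2/(8(I_{\max}K)^2)$. The identity $E\{L_s[n]^2\}=\sum_{l\geq0}(2l+1)P(L_s[n]>l)$ then makes the exponential tail summable and produces a uniform bound $G$ independent of $n$ and $s$.

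The main obstacle is the uniform drift claim. Because the joint state $(Q_s[t],Z[t],\hat{h}_s[t])$ lies in an uncountable space and the scheduling rule couples the conditional distribution of $\hat{h}_s[T_k]$ to the queue values, the KKT-based bound $\varepsilon Z[T_k]/(Q_s[T_k]K)$ is \emph{a priori} not uniform and degenerates when $Z[T_k]/Q_s[T_k]$ is small. Upgrading it to a genuinely universal $\mu$ --- effectively the large-deviation principle for the reception process flagged by the authors in the introduction --- is the delicate step and will require a careful structural argument exploiting the full scheduling rule and \eqref{eq114} to rule out that degenerate regime.
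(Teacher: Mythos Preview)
Your uniform-drift claim fails, and the gap is not a technicality but the crux. Because $\mathcal{G}_k$ contains $\hat{h}_s[T_k]$, the quantity $E\{I_k\mid\mathcal{G}_k\}=E\{I(h_s,P_s[T_k])\mid\hat{h}_s[T_k]\}K$ can be made arbitrarily small under the paper's assumptions. Concretely: take a slot with $Z[T_k]=0$ (this occurs at $t=0$ and recurrently). Then \eqref{eq78} yields $P_s[T_k]=P_{peak}\geq\varepsilon$ for every $s$, and receiver $s$ is scheduled whenever $Q_s[T_k]E\{I(h_s,P_{peak})\mid\hat{h}_s[T_k]\}$ is maximal among the receivers; nothing in the model prevents all $E\{I(h_u,P_{peak})\mid\hat{h}_u[T_k]\}$ from being simultaneously near zero. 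Your KKT bound $\varepsilon Z[T_k]/(Q_s[T_k]K)$ is then exactly zero, and \eqref{eq114} is an \emph{upper} bound on the slope, so it cannot furnish a lower bound on the conditional mean. No case analysis closes this, because the model genuinely permits $E\{I_k\mid\mathcal{G}_k\}$ to vanish on a set of positive probability.

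The paper sidesteps the drift problem with one extra minorization. After the power-floor step $I(h_s[t_{n,s,\tau}],P[t_{n,s,\tau}])\geq I(h_s[t_{n,s,\tau}],\varepsilon)$, it further replaces the right side by the receiver-wise minimum $\min_u I(h_u[t_{n,s,\tau}],\varepsilon)$. This statistic depends only on the channel vector $\textbf{h}[t]$ and not on which receiver is scheduled, so the paper treats the resulting sequence as i.i.d.\ across the scheduled slots with unconditional mean $E\{\min_u I(h_u,\varepsilon)\}>0$, and applies Cram\'er's large-deviation bound directly---no conditional drift, no Azuma. The degeneracy you identified never arises because the relevant mean is unconditional. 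Your tail-to-second-moment conversion in the last paragraph then matches the paper exactly.
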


In order to prove Lemma \ref{lem4}, we need to establish a
large-derivation principle for the mutual information accumulation
process expressed in \eqref{eq83}. However, there is a technical
difficulty: the mutual information $I(h_s[t],P[t])$ is
non-\emph{i.i.d.} in the scheduled slots of the rateless code. In
particular, the underlying Markov chain of Algorithm NCA has an
uncountable state space, which makes it difficult to check if a
large derivation principle holds \cite{LargeDeviationsbook}.
However, by using the transmission power lower bound
$P[t]\geq\varepsilon$ and some additional manipulations, we can
obtain a lower bound on $I(h_s[t],P[t])$ that is \emph{i.i.d.}
across the scheduled slots.

\begin{proof}
Let us consider the $n$th rateless code of receiver $s$.  Suppose that
the ${L}_s[n]$ packets of this rateless code are transmitted in the time
slots $t\in\{t_{n,s,1}=t_{n,s},t_{n,s,2},\cdots,t_{n,s,{L}_s[n]}\}$.
The tail probability of $\Pr\{{L}_s[n]> l\}$ satisfies
{\small\begin{eqnarray}
&&~~~\Pr\{{L}_s[n]> l\}\nonumber\\
&&=\Pr\bigg\{\sum_{\tau=1}^{l} I(h_s[t_{n,s,\tau}],P[t_{n,s,\tau}])K< {M}_s[n]\bigg\}\nonumber\\
&&\overset{(a)}{\leq}\Pr\bigg\{\sum_{\tau=1}^{l} I(h_s[t_{n,s,\tau}],P[t_{n,s,\tau}])K< {M}_{\max}\bigg\} \nonumber\\
&&\overset{(b)}{\leq}\Pr\bigg\{\sum_{\tau=1}^{l} I(h_s[t_{n,s,\tau}],\varepsilon)K< {M}_{\max}\bigg\},
\end{eqnarray}}
$\!\!\!$where step $(a)$ is due to $M_s[n]\leq {M}_{\max}$ in \eqref{eq87} and step $(b)$ is due to the transmission power lower bound $P[t]\geq\varepsilon$. The mutual information lower bound $I(h_s[t],\varepsilon)$ is still non-\emph{i.i.d.} across the scheduled slots, due to receiver scheduling.
Therefore, for large enough $l$, we make further modifications:
{\small\begin{eqnarray}
&&~~~\Pr\{{L}_s[n]> l\}\nonumber\\
&&\overset{(c)}{\leq}\Pr\bigg\{\sum_{\tau=1}^{l} \min_u[I(h_u[t_{n,s,\tau}],\varepsilon)]K< {M}_{\max}\bigg\} \nonumber\\
&&\overset{(d)}{\leq}\Pr\bigg\{\sum_{\tau=1}^{l} \min_u[I(h_u[t_{n,s,\tau}],\varepsilon)]-l a < 0\bigg\},
\end{eqnarray}}
$\!\!\!$where
step $(c)$ is due to $\min_u\{I(h_u[t],\varepsilon)\}\leq I(h_s[t],\varepsilon)$ and step $(d)$ due to the choice of $l\geq \frac{2{M}_{\max}}{E\{\min_u[I(h_u,\varepsilon)]\}K}$ and $a=1/2E\{\min_u[I(h_u,\varepsilon)]\}$.
Here, by choosing the smallest mutual information over all receivers, $\min_u[I(h_u[t],\varepsilon)]$ is \emph{i.i.d.} across the scheduled slots. According to \eqref{eq1}, there exists some $\theta>0$ such that
{\small\begin{eqnarray}
E\{e^{\theta \min_u[I(h_u,\varepsilon)]}\}<\infty.
\end{eqnarray}}
$\!\!\!$Therefore, we can use the large derivation theory \cite{Durrettbook10} to show that there exist some $\gamma(a)>0$ and $N$, such that the inequality
\begin{eqnarray}\label{eq93}
\Pr\bigg\{\sum_{\tau=1}^{l} \min_u[I(h_u[t_{n,s,\tau}],\varepsilon)]-l a < 0\bigg\}< e^{-l\gamma(a)}
\end{eqnarray}
holds for all $l>N$. Therefore, we have
{\small
\begin{eqnarray}\label{eq33}
&&~~~E\{{L}_s[n]^2\}\nonumber\\
&&=\sum_{l=1}^\infty \left[(l+1)^2-l^2\right] \Pr\{{L}[n]> l\}\nonumber\\
&&\leq\sum_{l=1}^{N} \left[(l+1)^2\!-\!l^2\right]\! +\! \sum_{l=N+1}^\infty \!\!\!\!\!\left[(l+1)^2\!-\!l^2\right] \Pr\{{L}_s[n]> l\}\nonumber\\
&&\leq (N+1)^2 + \sum_{l=N+1}^\infty (l+1)^2 e^{-l\gamma(a)} \nonumber\\
&&\leq (N+1)^2 + \int_0^\infty (\tau+1)^2 e^{-(\tau-1)\gamma(a)}d\tau.
\end{eqnarray}}
$\!\!\!$Since both terms of \eqref{eq33} are upper bounded, there must exist some $G>0$ such that
{\small\begin{eqnarray}\label{eq46}
E\{{L}_s[n]^2\}\leq G.
\end{eqnarray}}
$\!\!\!$Since the distribution of $\min_u[I(h_u[t_{n,s,\tau}],\varepsilon)]$ does not rely on any particular choice of $n$ and $s$, \eqref{eq93}-\eqref{eq46} hold uniformly for all $n$ and $s$, and the asserted statement is proved.
\end{proof}

We now analyze the evolution of the virtual queue $W_s[n]$: Suppose
$W_s[n]<0$ and $W_s[n+1]\geq0$. By \eqref{eq87}, the message size
${M}_s[n]$ starts to decrease. As long as $W_s[n]\geq0$, ${M}_s[n]$
keeps decreasing. Once ${M}_s[n]$ decreases to $0$, we have
$L_s[n]=1\leq L_{av}$ and $W_s[n]$ stops increasing. Since the step
size of \eqref{eq87} is $\delta$, $W_s[n]$ either stops increasing
or drops back to $W_s[n]<0$, within
$\kappa=\lceil{M}_{\max}/\delta\rceil$ rateless codes. Therefore,
the virtual queue $W_s[n]$ is upper bounded by
{\small\begin{eqnarray}\label{eq221}
W_s[n+\kappa]\leq \sum_{m=0}^{\kappa} L_s[n+m],\quad \forall~n.
\end{eqnarray}}
$\!\!\!$On the other hand, if $W_s[n]\geq0$ and $W_s[n+1]<0$, by
\eqref{eq87}, the message size ${M}_s[n]$ starts to increase. As
long as $W_s[n]<0$, ${M}_s[n]$ keeps increasing. Once ${M}_s[n]$
reaches ${M}_{\max}$, we have $L[n]\geq L_{av}$, since $I(h_s,P)\leq
I_{\max}$. Therefore, within $\kappa$ rateless codes, $W_s[n]$
either stops decreasing or grows up to $W_s[n]\geq0$. Therefore, $W_s[n]$ is lower bounded by
\begin{eqnarray}\label{eq212}
W_s[n+\kappa]\geq -(\kappa+1) L_{av}, \quad \forall~n.
\end{eqnarray}

Using these observations, we show the following theorem:
\begin{Theorem}\label{thm3}
Let $\{P[t],x_s[t],c[t],{M}_s[n]\}$ be
determined by Algorithm NCA, then the virtual queues $W_s[n]$ are rate
stable, i.e.,
{\small\begin{eqnarray}\label{eq6}
\lim_{n\rightarrow\infty}\frac{W_s[n]}{n}= 0.~~ \textrm{(w.p.1)}
\end{eqnarray}}
$\!\!\!$Hence, the time-average constraint \eqref{eq112} is satisfied with probability 1.
\end{Theorem}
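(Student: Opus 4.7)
The plan is to combine the deterministic envelope bounds \eqref{eq221}--\eqref{eq212} with the uniform second-moment bound of Lemma~\ref{lem4} to obtain almost-sure convergence of $W_s[n]/n$ to zero, and then telescope \eqref{eq71} to recover the time-average constraint \eqref{eq112}.

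The lower envelope \eqref{eq212} gives $W_s[n]/n \geq -(\kappa+1)L_{av}/n$, so $\liminf_{n\to\infty} W_s[n]/n \geq 0$ with probability one. For the upper direction, I would first use Lemma~\ref{lem4} to show that $L_s[n]/n \to 0$ almost surely. Since $E\{L_s[n]^2/n^2\} \leq G/n^2$ for every $n$, Tonelli's theorem yields $E\{\sum_{n=1}^{\infty} L_s[n]^2/n^2\} \leq G\sum_{n\geq1} n^{-2} < \infty$, hence $\sum_n L_s[n]^2/n^2 < \infty$ almost surely, and in particular $L_s[n]/n \to 0$ almost surely. A standard cutoff argument then upgrades this to $\max_{m\leq n+\kappa} L_s[m]/n \to 0$ almost surely: given $\epsilon>0$, on the full-measure event there is some $N(\omega)$ such that $L_s[m] \leq \epsilon m \leq \epsilon(n+\kappa)$ for all $m\geq N$, while the finitely many values $L_s[1],\dots,L_s[N-1]$ are $o(n)$.

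Next, I would apply \eqref{eq221} at the most recent index $n^\star \leq n$ at which $W_s[n^\star] < 0$ (such an index exists for $n$ large, since once $M_s$ is driven down to $0$ we have $L_s = 1 \leq L_{av}$ and $W_s$ non-increases, so $W_s$ cannot remain non-negative forever). Because $W_s$ can only grow during the first $\kappa$ codes of the excursion following $n^\star$ and is non-increasing thereafter, the peak of $W_s$ over the excursion is bounded by $\sum_{m=n^\star}^{n^\star+\kappa} L_s[m] \leq (\kappa+1)\max_{m\leq n+\kappa} L_s[m]$. Dividing by $n$ and using the previous step, $\limsup_{n\to\infty} W_s[n]/n \leq 0$ almost surely. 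Combined with the lower-envelope argument, this gives \eqref{eq6}. To deduce \eqref{eq112}, telescope \eqref{eq71}:
\begin{equation*}
\frac{1}{N}\sum_{n=1}^{N} L_s[n] \;=\; L_{av} + \frac{W_s[N+1] - W_s[1]}{N} \;\xrightarrow[N\to\infty]{\mathrm{a.s.}}\; L_{av}.
\end{equation*}

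The main obstacle is obtaining almost-sure (rather than merely $L^2$ or in-probability) convergence of $W_s[n]/n$ to zero, since $L_s[n]$ is unbounded. Lemma~\ref{lem4} is just strong enough for this: a uniform second-moment bound makes $\sum_n L_s[n]^2/n^2$ summable in expectation and hence finite almost surely, which is exactly what the cutoff step needs in order to control $\max_{m\leq n+\kappa} L_s[m]/n$ pointwise.
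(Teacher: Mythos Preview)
Your argument is essentially correct, but it is more elaborate than the paper's. The paper observes that the envelope bounds \eqref{eq221}--\eqref{eq212} together with Lemma~\ref{lem4} give a \emph{uniform} second-moment bound on $W_s[n]$ itself: when $W_s[n+\kappa]\geq 0$, Cauchy--Schwarz on \eqref{eq221} yields $E\{W_s[n+\kappa]^2\}\leq (\kappa+1)^2 G$, and the deterministic lower bound \eqref{eq212} handles the negative side, so $E\{W_s[n]^2\}\leq D$ for all $n$. Then Chebyshev gives $\Pr(|W_s[n]|>n\varepsilon)\leq D/(n^2\varepsilon^2)$, which is summable, and Borel--Cantelli finishes. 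This avoids the detour through $L_s[n]/n\to 0$ a.s., the cutoff step for $\max_{m\leq n+\kappa}L_s[m]/n$, and the excursion bookkeeping around $n^\star$.

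What your route buys is a more explicitly pathwise picture: you track the last negative time $n^\star$ and bound the current excursion peak by $(\kappa+1)\max_{m\leq n+\kappa}L_s[m]$, which is a nice structural statement. The cost is extra length and one small gap: your parenthetical that ``such an $n^\star$ exists for $n$ large'' is not guaranteed in general (e.g.\ if $L_{av}=1$ or during an initial phase before $W_s$ ever goes negative). In those cases $W_s$ is eventually non-increasing and bounded by $\sum_{m\leq \kappa+1}L_s[m]$, so $W_s[n]/n\to 0$ anyway, but you should say this explicitly. Once that is patched, both proofs are valid; the paper's is simply shorter because it works with $W_s$ directly in $L^2$ rather than pushing the second-moment information down to the $L_s[n]$ first.
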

\begin{proof}
If $W_s[n+\kappa]\geq0$, according to \eqref{eq221} and Lemma \ref{lem4}, the second moment of $W_s[n]$ is upper bounded by
{\small\begin{eqnarray}\label{eq11}
E\{W_s[n\!+\!\kappa]^2\}\!\leq\! E\bigg\{\bigg[\sum_{m=0}^{\kappa} L_s[n\!+\!m]\bigg]^2\bigg\}\!\leq\! (\kappa\!+\!1)^2 G.
\end{eqnarray}}
$\!\!\!$By \eqref{eq212} and \eqref{eq11}, there exists some $D>0$ such that
{\small\begin{eqnarray}
E\{W_s[n]^2\}\leq D,\quad \forall~n.\nonumber
\end{eqnarray}}
$\!\!\!$By Markov's inequality, for any $\varepsilon>0$, we have
{\small\begin{eqnarray}
\Pr\bigg\{\frac{W_s[n]}{n}>\varepsilon\bigg\}\leq \frac{E\{W_s[n]^2\}}{n^2\varepsilon^2}\leq \frac{D}{n^2\varepsilon^2},\nonumber
\end{eqnarray}}
$\!\!\!$and thus
{\small\begin{eqnarray}
\sum_{n=1}^\infty\Pr\bigg\{\frac{W_s[n]}{n}>\varepsilon\bigg\} <\infty.\nonumber
\end{eqnarray}}
$\!\!\!$Then, \eqref{eq6} follows from the Borel-Cantelli lemma \cite{Durrettbook10}. 
\end{proof}
\subsubsection{Step Two}\label{sec:performance}
We now utilize a modified Lyapunov drift method to analyze the performance of Algorithm NCA. One difficulty is that the rate region $\Lambda$ is not directly accessible. For this, we
construct a larger rate region $\Lambda_{out}$ satisfying $\Lambda\subseteq\Lambda_{out}$, and show that the performance of Algorithm NCA is within $O(1/L_{av})$ from the optimum of the following problem:
\begin{eqnarray}\label{eq95}
\max_{x_s} && \!\!\sum_{s=1}^SU_s(x_s)\\
\textrm{s.t.}~ && \!\!\textbf{x} \in \Lambda_{out}.\nonumber
\end{eqnarray}
To construct the outer rate region $\Lambda_{out}$, we consider the following
genie-assisted policy:
The transmitter has access to the perfect CSI
$\textbf{h}[t]$ for coding control, while the power allocation and
scheduling scheme is determined by only the imperfect CSI
$\hat{\textbf{h}}[t]$. This policy achieves the rate region $\Lambda_{out}$ such that for each point ${\textbf{x}}= ({x}_1,\cdots,{x}_S)\in\Lambda_{out}$ there exists a network control scheme $\{c[t], P[t]\}$ satisfying
{\small
\begin{eqnarray}\label{eq:rate-con}
&&\!\!\!\! \!\!\!\!\!\!x_s\leq \liminf_{T\rightarrow\infty}\frac{1}{T}\sum_{t=0}^{T-1}\left[ I(h_s[t],P[t])K1_{\{c[t]=s\}}\right]\!\!,~~~\\
&&\!\!\!\! \!\!\!\!\!\!0\leq x_s\leq D_s,\label{eq:rate-con1}\\
&&\!\!\!\! \!\!\!\!\!\!\limsup_{T\rightarrow\infty}\frac{1}{T}\sum_{t=0}^{T-1}P[t]\leq P_{av},\label{eq:power-con}\\
&&\!\!\!\! \!\!\!\!\!\!0\leq P[t]\leq P_{peak},\label{eq:power-con1}
\end{eqnarray}}
$\!\!\!$where $\{c[t], P[t]\}$ is determined by $\hat{\textbf{h}}[t]$, but not ${\textbf{h}}[t]$. We note that one can choose
$M_s[n] = I(h_s[t],P[t])K$ in the genie-assisted policy such that the mutual information in each slot is fully utilized. An alternative to this genie-assisted policy is to use infinite block-size channel codes to fully exploit the mutual information, which achieves the same rate region $\Lambda_{out}$, but results in unbounded decoding time.
\ifreport
In Appendix \ref{App:rate-region}, we prove that $\Lambda\subseteq\Lambda_{out}$.
\else
In \cite{report_rateless2012}, we prove that $\Lambda\subseteq\Lambda_{out}$.
\fi
Hence, the performance of problem \eqref{eq141} is upper bounded by \eqref{eq95}. Note that the key issue of fixed-rate codes with imperfect CSI are that the mutual information is under-utilized if the transmitter has imperfect CSI and the code-rate is different from the mutual information $I(h_s[t],P[t])K$.

%


Another difficulty is that $I(h_s[t],P[t])$ is not directly associated to the service process of the encoder queue $Q_s[t]$.
For this, we define an auxiliary queue
\begin{eqnarray}\label{eq311}
Y_s[t] = Q_s[t]+R_s[t]-M_s[n_s[t]].
\end{eqnarray}
From \eqref{eq82} and \eqref{eq85}, the evolution of $Y_s[t]$ is given by
\begin{eqnarray}\label{eq73}
&&Y_s[t+1] = \big( Y_s[t]-1_{\{c[t]=s,a[t]=0\}}I(h_s[t],P[t])K\nonumber\\
&&~~~~~~~~~~~~~~~-1_{\{a[t]=s\}}R_s[t]\big)^++x_s[t].
\end{eqnarray}
Therefore, the service process of $Y_s[t]$ is given by
the mutual information $I(h_s[t],P[t])K$, if $c[t]=s$ and $a[t]=0$ (i.e., the scheduled slot is not the last reception slot of a rateless code).
This motives us to utilize the auxiliary queue $Y_s[t]$ to construct the
Lyapunov drift.

Now, we still need to solve the following two remaining difficulties:
1) the transmitter only has access to $Q_s[t]$ but not the auxiliary queue $Y_s[t]$;
and 2) the obtained power allocation and scheduling scheme is
optimal only when $c[t]=s$ and $a[t]=0$.
The first problem is solved by a delayed queue analysis. Since $Q_s[t]-{M}_{\max}\leq Y_s[t] \leq
Q_s[t]$, we can show that replacing
$Y_s[t]$ with $Q_s[t]$ does not affect the attained performance
significantly. Second, although the power allocation and scheduling
scheme is not optimal when either $c[t]=0$ (i.e., no user is
scheduled due to the transmission power lower bound) or $a[t]\geq1$
(i.e., the scheduled slot is the last reception slot of a rateless
code), we show that the performance loss in these two cases are not
significant, if $\varepsilon$ tends to $0^+$ and $L_{av}$ is not too
small. In particular, the following statement holds:

\begin{lemma}\label{lem5}
Define the Lyapunov function $\Psi(Y_s,Z)=\sum_{s=1}^S Y_s^2+Z^2$.
If ${Q}_s[0]={Z}[0]=0$ and problem~\eqref{eq95} has a feasible solution based on $\hat{\textbf{h}}[t]$ and $\{P[t],x_s[t],c[t],{M}_s[n]\}$ are determined by Algorithm NCA, then
{\small
\begin{eqnarray}\label{eq53}
&&\!\!\!\!\!\!\!\!\!\!\!\!\!\!\!E\bigg\{\Psi(Y_s[t\!+\!1],Z[t\!+\!1])\!-\!\Psi(Y_s[t],Z[t])\nonumber\\
&&\!\!\!\!\!\!\!\!\!\!\!~~~~~~~~-V\sum_{s=1}^SU_s(x_s[t])\bigg|Q_s[t],Z[t]\bigg\}\nonumber\\
&&\!\!\!\!\!\!\!\!\!\!\!\!\!\!\!\!\!\!\!\!\!\leq\! VB_1\varepsilon\!+\!VB_2 E\{1_{\{a[t]\geq1\}}|Q_s[t],Z[t]\}\!+\!B_3\!-\!V \sum_{s=1}^SU_s(x_s^*),
\end{eqnarray}}
$\!\!\!$where $\sum_{s=1}^SU_s(x_s^*)$ is the optimal value of problem \eqref{eq95} and
{\small
\begin{eqnarray}
&&\!\!\!\!\!\!\!\!\!\!B_1=\max\limits_{\{s=1,\cdots,S\}}\{b_s\}CK,\nonumber\\
&&\!\!\!\!\!\!\!\!\!\!B_2=\max\limits_{\{s=1,\cdots,S\}}\{b_s\} I_{\max}K,\nonumber\\
&&\!\!\!\!\!\!\!\!\!\!B_3=I_{\max}^2K^2\!+\!P_{peak}^2\!+\!P_{av}^2\!+\!\sum_{s=1}^SD_s^2+2{M}_{\max}I_{\max}K.\nonumber
\end{eqnarray}}
\end{lemma}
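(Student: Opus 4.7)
The plan is to run a standard drift-plus-penalty argument on the auxiliary queue $Y_s[t]$ and the virtual power queue $Z[t]$, with three modifications that track the specific features of Algorithm NCA. First I square the recursions \eqref{eq73} and \eqref{eq86} using $\bigl((q-\mu)^{+}+x\bigr)^2\le q^2+\mu^2+x^2-2q(\mu-x)$ and sum over $s$, to obtain
\[
\Psi[t\!+\!1]-\Psi[t]\le B_3'+2\sum_{s=1}^S Y_s[t]\bigl(x_s[t]-\mu_s[t]\bigr)+2Z[t]\bigl(P[t]-P_{av}\bigr),
\]
where $\mu_s[t]$ denotes the $Y_s$-service from \eqref{eq73} and $B_3'$ collects the bounded square terms that will ultimately contribute to $B_3$. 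Subtracting $V\sum_s U_s(x_s[t])$ splits the drift-plus-penalty into a rate-control piece $2Y_s x_s-VU_s(x_s)$ and a service--power piece $-2Y_s\mu_s+2Z(P-P_{av})$.

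For the rate-control piece I would apply the NCA optimality of $x_s[t]$ from \eqref{eq72}: for every $x_s^*\in[0,D_s]$,
\[
VU_s(x_s[t])-x_s[t]^2-2Q_s[t]x_s[t]\ge VU_s(x_s^*)-(x_s^*)^2-2Q_s[t]x_s^*.
\]
Combined with $|Y_s[t]-Q_s[t]|\le M_{\max}$ from \eqref{eq311} and the cap $x_s[t]\le D_s$, this yields $2Y_s[t]x_s[t]-VU_s(x_s[t])\le 2Y_s[t]x_s^*-VU_s(x_s^*)+C$, where $C$ is folded into $B_3$. A crucial by-product of the first-order condition for \eqref{eq72} is that $x_s[t]=0$ whenever $Q_s[t]>Vb_s/2$; starting from $Q_s[0]=0$, a one-step induction then yields the deterministic bound $Y_s[t]\le Q_s[t]\le Vb_s/2+D_s$ for every $t$, which is the key ingredient for the two error terms.

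For the service--power piece I would rewrite $\mu_s[t]=1_{\{c[t]=s\}}I(h_s[t],P[t])K-1_{\{a[t]=s\}}\bigl(I(h_s[t],P[t])K-R_s[t]\bigr)$, so the second summand contributes a positive loss $2Y_{a[t]}[t]I(h_{a[t]}[t],P[t])K\cdot 1_{\{a[t]\ge 1\}}$, which the queue bound converts into $VB_2\cdot 1_{\{a[t]\ge 1\}}+\text{const}$. For the remaining principal term $-2\sum_s Y_s 1_{\{c[t]=s\}}I K+2ZP$ I would compare Algorithm NCA against a stationary randomized benchmark $(c^*[t],P^*[t])$ that depends only on $\hat{\textbf{h}}[t]$, achieves any $x_s^*\in\Lambda_{out}$, and satisfies $E\{P^*[t]\}\le P_{av}$. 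When $P_{\varsigma[t]}[t]\ge\varepsilon$, NCA exactly maximizes the scheduling--power drift and the benchmark comparison is free. When $P_{\varsigma[t]}[t]<\varepsilon$, the concavity together with the derivative bound \eqref{eq114} gives $E_{\textbf{h}}\{I(h_s,P)\mid\hat h_s\}\le CP$, so the benchmark value is at most $\varepsilon\max_s Q_s[t]\cdot CK$, which the queue bound converts into $VB_1\varepsilon+\text{const}$.

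Taking conditional expectations, using the benchmark relations $E\{I(h_s,P^*)K\,1_{\{c^*=s\}}\mid Q,Z\}\ge x_s^*$ and $E\{P^*\mid Q,Z\}\le P_{av}$, and collecting the leftover constants into $B_3$, recovers the claimed inequality. The main obstacle I expect is the joint handling of the two non-standard error sources: extracting the deterministic bound $Q_s[t]\le Vb_s/2+D_s$ from the rate-control first-order condition and coupling it with \eqref{eq114} to pay only $O(V\varepsilon)$ for idling slots and $O(V)\cdot 1_{\{a[t]\ge 1\}}$ for ACK slots, while carefully absorbing the residual $O(M_{\max}D_s)$ cross terms from replacing $Y_s$ by $Q_s$ into the single constant $B_3$.
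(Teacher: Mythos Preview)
Your proposal is correct and follows essentially the same route as the paper: square the $Y_s$ and $Z$ recursions, compare the rate-control piece to $x_s^*$ via the optimality in \eqref{eq72}, compare the service--power piece to a stationary $\hat{\mathbf h}$-only benchmark (the paper packages this as Lemma~\ref{lem3}), and use the deterministic bound $Q_s[t]\le Vb_s/2$ from the first-order condition (Lemma~\ref{lem:stable} in the paper) together with $|Y_s-Q_s|\le M_{\max}$ to convert the idle-slot and ACK-slot losses into $VB_1\varepsilon$ and $VB_2\,1_{\{a[t]\ge 1\}}$. The only cosmetic differences are that the paper organizes the service--power comparison as an explicit three-case split ($c[t]=0$, $c[t]\ge 1$ with $a[t]=0$, $a[t]\ge 1$) rather than your $\mu_s$ rewrite, and it proves the slightly sharper bound $Q_s[t]\le Vb_s/2$ (without the $+D_s$), which is why no extra constant appears in $B_3$.
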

\begin{proof}
See Appendix \ref{App4}.
\end{proof}
In the proof of Lemma \ref{lem5}, we have used the following result:
\begin{lemma}\label{lem:stable}
If ${Q}_s[0]={Z}[0]=0$ and $\{P[t],x_s[t],c[t],$ ${M}_s[n]\}$ are determined by Algorithm NCA, then the queue backlogs
$Q_s[t]$ and $Z[t]$ satisfy
{\small\begin{eqnarray}\label{eq75}
&&{Q}_s[t]\leq \frac{b_sV}{2},~\forall~t\geq0,\\
&&{Z}[t]\leq \max_s\{b_s\}\frac{CVK}{2}+P_{peak},~\forall~t\geq0.\label{eq76}
\end{eqnarray}}
$\!\!\!$Therefore, the encoder queues ${Q}_s[t]$ are rate stable, and the time-average power constraint \eqref{eq23} holds with probability 1.
\end{lemma}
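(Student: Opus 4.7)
The plan is to prove both bounds by a straightforward induction on $t$, relying on the first-order optimality conditions of the one-shot problems \eqref{eq72} and \eqref{eq78}, and then read off rate stability and the time-average power constraint as immediate consequences of the deterministic boundedness. The two bounds interact, so I will establish the encoder-queue bound first and then use it inside the argument for $Z[t]$.

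For the bound on $Q_s[t]$, I will analyze the rate-control subproblem \eqref{eq72}. The objective $VU_s(x)-x^2-2Q_s[t]x$ has derivative $VU_s'(x)-2x-2Q_s[t]$, and by concavity of $U_s$ we have $U_s'(x)\le U_s'(0)=b_s$. Hence, if the optimizer $x_s[t]$ is strictly positive, the first-order condition forces $2x_s[t]+2Q_s[t]\le Vb_s$, i.e.\ $x_s[t]\le Vb_s/2-Q_s[t]$. On the other hand, if $Q_s[t]\ge Vb_s/2$, the derivative at $x=0$ is already non-positive, and the non-increasing property of the derivative gives $x_s[t]=0$. In either case $Q_s[t]+x_s[t]\le Vb_s/2$, so the encoder-queue evolution \eqref{eq85} yields $Q_s[t+1]\le Q_s[t]+x_s[t]\le Vb_s/2$. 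Since $Q_s[0]=0\le Vb_s/2$, induction establishes \eqref{eq75}.

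For $Z[t]$ I will proceed analogously, using the power allocation \eqref{eq78}. The inner objective $Q_s[t]E_{\textbf{h}}\{I(h_s,P)|\hat h_s\}K-Z[t]P$ is concave in $P$ because $I(h_s,\cdot)$ is concave and expectation preserves concavity. Differentiating at $P=0$ and invoking \eqref{eq114}, the slope is at most $Q_s[t]CK-Z[t]\le (V\max_s\{b_s\}/2)CK-Z[t]$, where the inequality uses the already-proved bound \eqref{eq75}. Thus whenever $Z[t]\ge\max_s\{b_s\}CVK/2$, the slope is non-positive at $P=0$, and concavity forces $P_s[t]=0$ for every $s$; consequently $P_{\varsigma[t]}[t]=0<\varepsilon$, no user is scheduled, and $P[t]=0$. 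To close the induction, assume $Z[t]\le\max_s\{b_s\}CVK/2+P_{peak}$. If $Z[t]\ge\max_s\{b_s\}CVK/2$ then $P[t]=0$ and $Z[t+1]\le Z[t]$ by \eqref{eq86}; otherwise $Z[t]<\max_s\{b_s\}CVK/2$ and $Z[t+1]\le Z[t]+P[t]\le\max_s\{b_s\}CVK/2+P_{peak}$. Since $Z[0]=0$ satisfies the bound, \eqref{eq76} follows.

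Rate stability of $Q_s[t]$ is then immediate from the deterministic bound \eqref{eq75}: $Q_s[t]/t\le b_sV/(2t)\to 0$. For the time-average power constraint, I will drop the $(\cdot)^+$ in \eqref{eq86} to get $Z[t+1]\ge Z[t]-P_{av}+P[t]$, telescope from $0$ to $T-1$, and divide by $T$, obtaining
\begin{equation}
\frac{1}{T}\sum_{t=0}^{T-1}P[t]\le\frac{Z[T]-Z[0]}{T}+P_{av}.
\end{equation}
Letting $T\to\infty$ and invoking the uniform bound \eqref{eq76} on $Z[T]$ makes the first term vanish, which yields \eqref{eq23} with probability one. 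The only subtlety I anticipate is keeping the two inductions in the right order, since the $Z$-bound genuinely needs the $Q$-bound as an input; everything else is a direct consequence of concavity plus the one-step drift of each queue.
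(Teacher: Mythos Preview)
Your proposal is correct and follows essentially the same approach as the paper: both bound $Q_s[t]$ first via the concavity/KKT conditions of the rate-control subproblem \eqref{eq72} (so that $x_s[t]+Q_s[t]\le b_sV/2$ whenever $x_s[t]>0$), then feed that bound into the derivative test for \eqref{eq78} using \eqref{eq114} to force $P[t]=0$ once $Z[t]\ge \max_s\{b_s\}CVK/2$, and finally telescope \eqref{eq86} to obtain \eqref{eq23}. The only cosmetic difference is that the paper first writes out the linear-utility case explicitly before invoking KKT for general concave $U_s$, whereas you handle both at once; your ordering of the two inductions (the $Z$-bound needs the $Q$-bound) is exactly the dependency the paper uses as well.
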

\ifreport
\begin{proof}
See Appendix \ref{App1}.
\end{proof}
\else
The proof of Lemma \ref{lem:stable} is provided in our technical
report~\cite{report_rateless2012} and is omitted here due to space
limitations.
\fi

Lemma \ref{lem5} suggests that Algorithm NCA has a performance close to that of problem \eqref{eq95}, if
$\varepsilon$ is very small, $V$ is very large, and the ACK event $a[t]\geq1$ does not
occur too often. On the other hand, according to Theorem \ref{thm3},
we can obtain
\begin{eqnarray}\label{eq55}
\limsup_{T\rightarrow\infty}\frac{1}{T} \sum_{t=0}^{T-1}1_{\{a[t]\geq1\}}\leq\frac{1}{L_{av}},~~\textrm{(w.p.1)}
\end{eqnarray}
and thereby the ACK event $a[t]\geq1$ only happens in no more than $1/L_{av}$ time slots.
\ifreport
We substitute \eqref{eq55} into Lemma \ref{lem5} to establish the following theorem:
\else
In \cite{report_rateless2012}, we substitute \eqref{eq55} into Lemma \ref{lem5} to establish the following theorem:
\fi
\begin{Theorem}\label{The2}
If ${Q}_s[0]={Z}[0]=0$ and problem~\eqref{eq95} has a feasible solution based on $\hat{\textbf{h}}[t]$ and $\{P[t],x_s[t],$ $c[t],{M}_s[n]\}$ are determined by Algorithm NCA, then the achieved time-average rate $\overline{x}_s$ satisfies
\begin{eqnarray}\label{eq54}
\sum_{s=1}^SU_s\left(\overline{x}_s\right)\geq  \sum_{s=1}^SU_s(x_s^*)-B_1\varepsilon \!-\!\frac{B_2}{L_{av}}\!-\! \frac{B_3}{V},~\textrm{(w.p.1)}
\end{eqnarray}
where $x_s^*$, $B_1$, $B_2$, and $B_3$ are defined in \eqref{eq53}.
\end{Theorem}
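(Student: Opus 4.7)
The plan is to combine the one-step drift-plus-penalty bound from Lemma~\ref{lem5}, the almost-sure bound on the time-average ACK rate given in~(\ref{eq55}), and the deterministic backlog bounds from Lemma~\ref{lem:stable} into a pathwise utility inequality. First, I would telescope the conditional drift bound of Lemma~\ref{lem5}: taking total expectation, summing from $t=0$ to $T-1$, and using the initial conditions $Q_s[0]=Z[0]=0$ together with $\Psi(\cdot,\cdot)\geq 0$, I arrive at the expectation-form bound
\[
\frac{1}{T}\!\sum_{t=0}^{T-1}\!E\!\left\{\sum_s U_s(x_s[t])\right\} \geq \sum_s U_s(x_s^*) - B_1\varepsilon - \frac{B_2}{T}\!\sum_{t=0}^{T-1}\!E\{1_{\{a[t]\geq 1\}}\} - \frac{B_3}{V} - \frac{\Psi(Y_s[0],Z[0])}{VT}.
\]

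The next step is to lift this bound to hold w.p.1. Because Lemma~\ref{lem:stable} caps $Q_s[t]$ and $Z[t]$ deterministically and $R_s[t]$ is bounded by $M_{\max}$ by the construction of Algorithm NCA, the auxiliary queue $Y_s[t]$ is uniformly bounded, making the Lyapunov increments $\Delta\Psi[t]\triangleq\Psi[t+1]-\Psi[t]$ uniformly bounded as well. I would decompose
\[
\Delta\Psi[t]-V\sum_s U_s(x_s[t]) = E\!\left\{\Delta\Psi[t]-V\sum_s U_s(x_s[t])\,\bigg|\,Q_s[t],Z[t]\right\}+M[t],
\]
where $M[t]$ is a bounded martingale-difference sequence. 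Azuma--Hoeffding combined with Borel--Cantelli then yields $\frac{1}{T}\sum_{t=0}^{T-1}M[t]\to 0$ w.p.1, and $(\Psi[T]-\Psi[0])/T\to 0$ w.p.1 by boundedness of $\Psi$. Substituting the conditional bound of Lemma~\ref{lem5} and applying a second martingale argument to replace $E\{1_{\{a[t]\geq 1\}}\mid Q_s[t],Z[t]\}$ by the indicator itself gives
\[
\liminf_{T\to\infty}\frac{1}{T}\sum_{t=0}^{T-1}\sum_s U_s(x_s[t]) \geq \sum_s U_s(x_s^*) - B_1\varepsilon - B_2\,\limsup_{T\to\infty}\frac{1}{T}\sum_{t=0}^{T-1}1_{\{a[t]\geq 1\}} - \frac{B_3}{V}
\]
w.p.1, and the ACK time-average is bounded by $1/L_{av}$ via~(\ref{eq55}).

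Finally, I would apply Jensen's inequality to the concave utility $U_s$: $\frac{1}{T}\sum_{t=0}^{T-1}U_s(x_s[t])\leq U_s\!\big(\frac{1}{T}\sum_{t=0}^{T-1}x_s[t]\big)$. Since $U_s$ is continuous and non-decreasing and $\overline{x}_s=\liminf_T \frac{1}{T}\sum_t x_s[t]$ by definition, lower semicontinuity yields $\liminf_T U_s\!\big(\frac{1}{T}\sum_t x_s[t]\big)\geq U_s(\overline{x}_s)$. Summing over $s$ and chaining with the inequality above produces exactly~(\ref{eq54}).

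The main obstacle is the almost-sure upgrade: Lemma~\ref{lem5} only delivers a conditional expectation bound, whereas the theorem's conclusion and the ACK time-average bound~(\ref{eq55}) are pathwise statements. The deterministic queue caps of Lemma~\ref{lem:stable} are essential here because they bound the per-slot Lyapunov increments uniformly, enabling the Azuma-style concentration argument that closes the gap between the expectation and almost-sure statements; without them one would at most obtain an expected utility bound and would need additional, more delicate concentration analysis to recover the w.p.1 claim.
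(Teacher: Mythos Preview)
Your proposal is correct and arrives at the same conclusion as the paper, but the route you take to pass from the conditional drift bound of Lemma~\ref{lem5} to the almost-sure utility inequality is genuinely different. The paper first telescopes Lemma~\ref{lem5} \emph{in expectation}, converts the pathwise ACK-rate bound~\eqref{eq55} into an expectation bound via Fatou's lemma (this is the step~\eqref{eq:fatou}--\eqref{eq:fatou1} in Appendix~\ref{App2}), obtains the expected time-average utility bound, and then invokes \cite[Proposition~6.1]{Neely2010}/\cite[Theorem~4.4]{Neelybook10} as a black box to upgrade to w.p.1. You instead exploit the deterministic backlog caps of Lemma~\ref{lem:stable} to bound the per-slot Lyapunov increments and run an Azuma--Hoeffding\,$+$\,Borel--Cantelli argument, combining directly with the pathwise bound~\eqref{eq55}. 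In effect you are unpacking what the cited Neely results do internally, which makes your argument more self-contained at the cost of a little extra bookkeeping; the paper's route is shorter because it outsources the concentration step.

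One technical point you should make explicit: for your two martingale-difference decompositions to be valid simultaneously, you need Lemma~\ref{lem5}'s inequality to hold when the conditioning $\sigma$-algebra is the \emph{full} history filtration $\mathcal{F}_t$ (which contains $R_s[t]$), not merely $\sigma(Q_s[t],Z[t])$. Otherwise $1_{\{a[t]\ge1\}}-E\{1_{\{a[t]\ge1\}}\mid Q_s[t],Z[t]\}$ is not a martingale difference with respect to the filtration you are using for the first decomposition. Inspecting the proof of Lemma~\ref{lem5} shows the bound does hold conditioned on $\mathcal{F}_t$ (the comparison is against an $\hat{\mathbf{h}}$-only stationary policy independent of $\mathcal{F}_t$), so this is only a matter of stating the right filtration, not a real obstacle. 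The final Jensen/concavity step is handled identically in both arguments.
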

\ifreport
\begin{proof}
See Appendix \ref{App2}.
\end{proof}
\fi
Thus, by setting $\varepsilon\rightarrow0^+$ and increasing the values of $V$ and $L_{av}$, we can get arbitrarily close to the optimal system utility of problem \eqref{eq95}.

Theorem \ref{The2} allows trading complexity/delay for
performance gains in the absence of accurate CSI: For a large $V$
parameter, the optimal network utility of infinite block-size codes
is reached as O$(1/L_{av})$, where $L_{av}$ is the determinant of
the decoding complexity for our rateless code scheme. On the other
hand, conventional schemes for fixed-rate codes can only get close
to the performance upper bound when the difference between
$\textbf{h}[t]$ and $\hat{\textbf{h}}[t]$ is very small.

We finally note that our scheme significantly reduces the feedback
overhead (in terms of bandwidth, time, and power) when no CSI is
available to the transmitter: According to \eqref{eq55}, the amount
of ACK feedback in our scheme is at most $1/L_{av}$ of those for
fixed-rate codes, where an ACK feedback is required in each slot.

%
%
%
%


\section{Simulation Results}
We present simulation results of Algorithm NCA. In our theoretical
analysis, we assume that $\{{h}_s[t],\hat{{h}}_s[t]\}$ are
\emph{i.i.d.} across time. Here, we check if our proposed Algorithm NCA is
robust for time-correlated wireless channels. To illustrate
this, we consider a first order autoregressive (AR) Rayleigh fading
process in our simulations. In particular, the channel states
$\{h_s[t],\hat{h}_s[t]\}$ are modeled by
\begin{eqnarray}
&&h_s[t+1] = \sqrt{0.1} h_s[t] + \sqrt{0.9} n_s[t],\\
&&\hat{h}_s[t] = \sqrt{\rho} h_s[t] + \sqrt{1-\rho} \hat{n}_s[t],
\end{eqnarray}
where $n_s[t]$ and $\hat{n}_s[t]$ are \emph{i.i.d.} circular-symmetric
zero-mean complex Gaussian processes, and $\rho$ represents the
accuracy of the imperfect CSI $\hat{h}_s[t]$. The mutual information
is expressed by $I(h_s,P)=\max\{\log_2(1+|h_s|^2P),I_{\max}\}$, where the additional upper bound $I_{\max}$ is due to the limited
dynamic range of practical RF receivers. The utility
function is determined by $U_s(x_s)=\ln(1+x_s/K)$. The average
SNR is given by $E\{|h_s[t]|^2\}P_{av}=12$ dB. The results for the case of \emph{i.i.d.} channel is similar, and is omitted here due to space limitation.

Two reference strategies are considered for the purpose of performance
comparison: The first one uses infinite block-size channel codes
(or equivalently the genie-assisted policy in Section \ref{sec:performance}), which achieves the performance upper bound in problem \eqref{eq95}, but
is infeasible to implement in a practical system. The second one uses
fixed-rate channel codes, where the code-rate $R$ is selected to
maximize the goodput $R\Pr\{I(h_s,P)K\geq R|\hat{h}_s\}$. Network control
schemes are designed for these two reference strategies to maximize
their corresponding total network utility.
\begin{figure}
\centering
\includegraphics[width=3in]{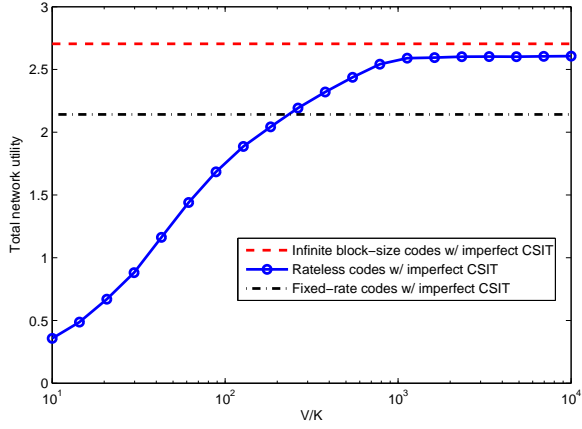}
\vspace{-0.1cm}
\caption{Simulation results of total network utility versus the algorithm parameter $V$ for $L_{av} =10$, $S=3$ and $\rho = 0.8$.} \label{fig1}
\vspace{-0.2cm}
\end{figure}
\begin{figure}
\centering
\includegraphics[width=3in]{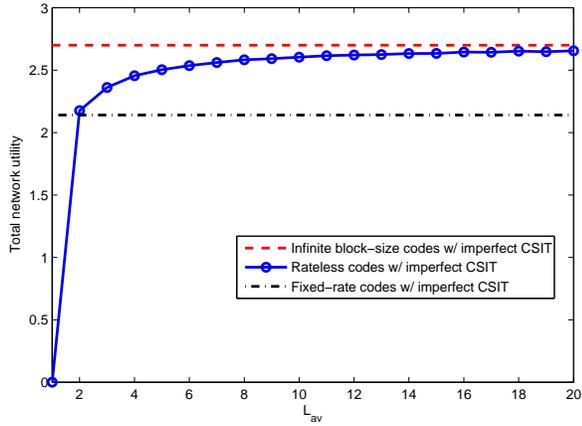}
\vspace{-0.1cm}
\caption{Simulation results of total network utility versus the time-average block-size $L_{av}$ for $V =10000K$, $S=3$ and $\rho = 0.8$.} \label{fig2}
\vspace{-0.2cm}
\end{figure}

Figure \ref{fig1} compares the results of total network utility
versus the algorithm parameter $V$ for $L_{av} =10$, $S=3$ and $\rho
= 0.8$, where $K$ is the number of symbols in each packet, and ``CSIT'' stands for CSI at the transmitter. The performance of rateless codes first improves as $V$
increases, and then tends to a constant value. For sufficiently
large $V$, the total network utility of rateless codes is much
larger than that of fixed-rate codes and is quite close to that of
infinite block-size codes. Figure \ref{fig2} illustrates the
complexity/delay vs. utility tradeoff, as it plots the total network
utility versus the time-average block-size $L_{av}$ for $V =10000K$,
$S=3$ and $\rho = 0.8$. The performance of rateless codes improves
as $L_{av}$ increases. When $L_{av}\geq 2$, rateless codes can
realize a larger network utility than fixed-rate codes that are also
optimized for this system. Figure \ref{fig3} provides the results of
total network utility versus the CSI accuracy $\rho$ for $V
=10000K$, $L_{av} =10$ and $S=3$. The performance of all three
strategies improves as $\rho$ increases. When $\rho=0$, the
cumulative spectral efficiency of rateless codes and fixed-rate
codes are given by 3.246 bits/s/Hz and 1.93 bits/s/Hz, respectively,
which corresponds to a throughput improvement of 68\%.
\begin{figure}
\centering
\includegraphics[width=3in]{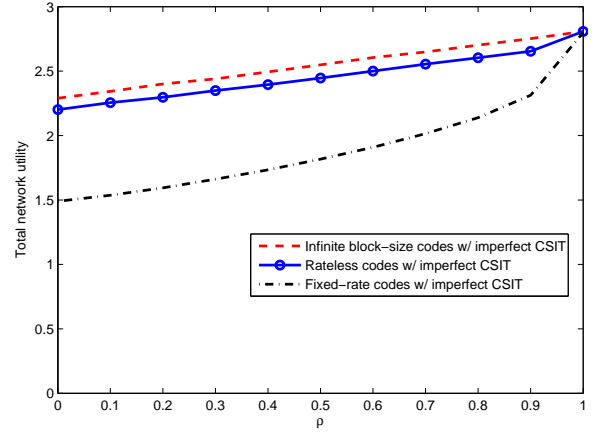}
\vspace{-0.1cm}
\caption{Simulation results of total network utility versus the CSI accuracy $\rho$ for $V =10000K$, $L_{av} =10$ and $S=3$.} \label{fig3}
\vspace{-0.2cm}
\end{figure}
\begin{figure}
\centering
\includegraphics[width=3in]{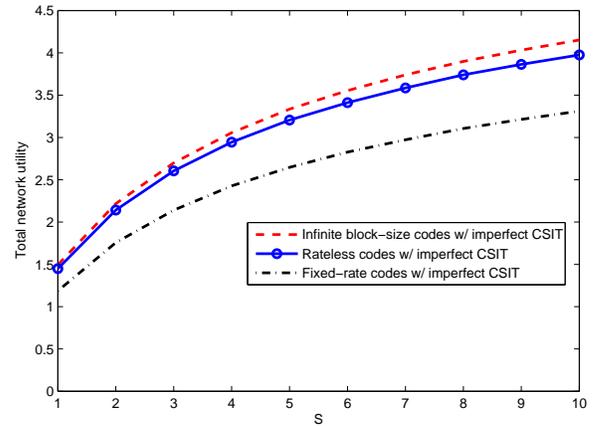}
\vspace{-0.1cm}
\caption{Simulation results of total network utility versus the number of receivers $S$ for $V =10000K$, $L_{av} =10$ and $\rho=0.8$.} \label{fig4}
\vspace{-0.2cm}
\end{figure}
When
$\rho=1$, the CSI ${\textbf{h}}[t]$ is perfectly known to the
transmitter, and we modify the encoding control scheme \eqref{eq87}
by choosing the message size as $M_s[n] =
I(h_s[t_{n,s}],P[t_{n,s}])K$ to eliminate the rate loss as in
problem \eqref{eq95}. By this, all three strategies achieve the same
performance. Finally, Fig. \ref{fig4} shows the network utility
results versus the receiver number $S$ for $V =10000K$, $L_{av} =10$
and $\rho=0.8$. The performance of all three strategies improves as
$S$ increases, which exhibits a multi-user diversity gain.


\section{Conclusion}
We have attempted to answer an important question of how to
appropriately manage network resources in the absence of (or with
imperfect) CSI. To that end, we developed a cross-layer solution for
downlink cellular systems with imperfect CSI at the transmitter,
which utilize rateless codes to resolve channel uncertainty. To
keep the decoding complexity low, we explicitly incorporated
time-average block-size constraints in our formulation, subject to
which we maximized the system utility. Our network control scheme
jointly controls transmission power, scheduling, and channel coding,
and exhibits an elegant utility-complexity tradeoff. Our simulation
results suggest that rateless codes can improve the network
throughput by up to 68\% in certain scenarios, compared with
solutions that maximize the utility using fixed-rate codes.

\appendices
\section{Proof of $\Lambda\subseteq\Lambda_{out}$}\label{App:rate-region}
Let us choose any time-average rate point $\overline{\textbf{x}}$ from the rate region $\Lambda$, which is achieved by a control scheme
$\{x_s[t],c[t],P[t],{M}_s[n]\}$.
By the evolution of the encoder queue $Q_s[t]$ in \eqref{eq85}, we derive
\begin{eqnarray}\label{eq:rate0}
{Q_s[t+1]}-{Q_s[t]}+M_s[n_s[t]]1_{\{a[t]=s\}}\geq x_s[t],
\end{eqnarray}
which further suggests
\begin{eqnarray}\label{eq:rate0}
\frac{Q_s[T]}{T}-\frac{Q_s[0]}{T}+\frac{1}{T}\sum_{t=0}^{T-1}
M_s[n_s[t]]1_{\{a[t]=s\}}\nonumber\\
\geq\frac{1}{T}\sum_{t=0}^{T-1}
x_s[t].
\end{eqnarray}
Taking a liminf on both sides of \eqref{eq:rate0} as $T\rightarrow\infty$, and utilizing the stability constraint \eqref{eq:stable_constraint}, we obtain that
\begin{eqnarray}\label{eq:rate1}
\liminf_{T\rightarrow\infty}\frac{1}{T}\sum_{t=0}^{T-1}
x_s[t]\leq \liminf_{T\rightarrow\infty}\frac{1}{T}\sum_{t=0}^{T-1} M_s[n_s[t]]1_{\{a[t]=s\}}.
\end{eqnarray}
On the other hand, the reception process of rateless codes in \eqref{eq83} implies
\begin{eqnarray}\label{eq:rate2}
M_s[n]\!\leq\sum_{t=t_{n,s}}^{t_{n+1,s}-1} \!\! I(h_s[t],P[t])K1_{\{c[t]=s\}} .
\end{eqnarray}
Substituting \eqref{eq:rate2} into \eqref{eq:rate1} yields
\begin{eqnarray}
\liminf_{T\rightarrow\infty}\frac{1}{T}\sum_{t=0}^{T-1}
x_s[t]\leq\liminf_{T\rightarrow\infty}\frac{1}{T}\sum_{n=0}^{T-1}
I(h_s[t],P[t])K1_{\{c[t]=s\}}.\!\!\!\!\!\!\!\!\!\!\!\!\nonumber\\
\end{eqnarray}
Hence, we have obtained
\begin{eqnarray}
\overline{x}_s\leq\liminf_{T\rightarrow\infty}\frac{1}{T}\sum_{n=0}^{T-1}
I(h_s[t],P[t])K1_{\{c[t]=s\}}.
\end{eqnarray}
Since $0\leq x_s[t]\leq D_s$, one can readily show that
\begin{eqnarray}
0\leq \overline{x}_s\leq D_s.
\end{eqnarray}
By the power constraints \eqref{eq31} and \eqref{eq23}, the control scheme
$\{x_s[t],c[t],P[t],{M}_s[n]\}$ to achieve $\overline{\textbf{x}}$ also satisfies \eqref{eq:power-con} and \eqref{eq:power-con1}.
Finally, in view of the fact that $\overline{\textbf{x}}$ is achieved by utilizing the imperfect CSI $\hat{\textbf{h}}[t]$, we attain that $\overline{\textbf{x}}\in\Lambda_{out}$, which proves the asserted statement.

\section{Proof of \emph{Lemma
\ref{lem:stable}}}\label{App1}
\subsection{Proof of \eqref{eq75}}

If $U_s(x)= b_s x$, the solution to \eqref{eq72}, i.e., $x_s[t]$, is given by
\begin{eqnarray}\label{eq74}
{x}_s[t]=\min\left\{\left(\frac{b_sV}{2}-Q_s[t]\right)^+,D_s\right\}.
\end{eqnarray}
If ${Q}_s[t]\leq \frac{b_sV}{2}$, one can simply show that
\begin{eqnarray}\label{eq89}
Q_s[t+1]\leq Q_s[t]+ x_s[t]\leq \frac{b_sV}{2}.
\end{eqnarray}

If $U_s(x)$ is a non-linear concave function, its
gradient $U_s^\prime(x)$ is non-increasing in $x$. Therefore $U_s^\prime(x_s[t]) \leq U_s^\prime(0)= b_s$. According to the KKT conditions, $x_s[t]$ needs to satisfy
\begin{eqnarray}
VU_s^\prime(x_s[t])-2x_s[t]-2Q_s[t] =0,
\end{eqnarray}
and thereby
\begin{eqnarray}
Vb_s-2x_s[t]-2Q_s[t] \geq 0,
\end{eqnarray}
if ${x}_s[t]\in (0,D_s)$.
By this, $x_s[t]$ is no larger than the right-hand-side (RHS) of \eqref{eq74}, and \eqref{eq89} still holds for non-linear $U_s(x)$. Since $Q_s[0]=0$, we get
\begin{eqnarray}\label{eq:finiteQ_s}
Q_s[t]\leq \frac{b_sV}{2}, ~\forall~t\geq0,
\end{eqnarray}
and the result of \eqref{eq75} follows.
\subsection{Proof of \eqref{eq76}}

If $Z[t]\geq \frac{b_sCVK}{2}$, the optimal solution to \eqref{eq78} is $P_s[t]=0$. Further, if $Z[t]\geq \max_{\{s=1,\cdots,S\}}\left\{\frac{b_sCVK}{2}\right\}$, we have $P[t]=0$. On the other hand, if $Z[t]\leq \max_{\{s=1,\cdots,S\}}\left\{\frac{b_sCVK}{2}\right\}$, \eqref{eq31} indicates that $P[t]\leq P_{peak}$. Therefore, we can see that
\begin{eqnarray}\label{eq32}
&&Z[t+1]\leq \max_{\{s=1,\cdots,S\}}\left\{\frac{b_sCVK}{2}\right\}+P_{peak}.
\end{eqnarray}
Since $Z[0]=0$, one can readily obtain \eqref{eq76}. 

Now, let us show \eqref{eq23}. By \eqref{eq86}, we derive
\begin{eqnarray}\label{eq150}
\frac{Z[T]}{T}\!-\!\frac{Z[0]}{T}\geq \frac{1}{T}\sum_{t=0}^{T-1}\left({P}[t]\!-\!P_{av}\right).
\end{eqnarray}
Taking a limsup on both sides of \eqref{eq150} as $T\rightarrow\infty$, and utilizing \eqref{eq76}, we obtain that \eqref{eq23} holds with probability 1.

\section{Proof of \emph{Lemma
\ref{lem5}}}\label{App4}

We need to use the following lemma:
\begin{lemma} \label{lem3} 
If the problem \eqref{eq95} has a feasible solution and $\hat{\textbf{h}}[t]$ is i.i.d. across time, then for any $\delta>0$ there is an $\hat{\textbf{h}}-$only stationary and randomized control scheme $\{x_s^*,P^*[t],c^*[t]\}$ that satisfies $0\leq P^*[t]\leq P_{peak}$, $0\leq x_s^*\leq D_{s}$, and
\begin{eqnarray}
&&\!\!\!\!\!\! \!\!\!\!\!\!opt^*\leq \sum_{s=1}^SU_s(x_s^*)+\delta,\\
&&\!\!\!\!\!\! \!\!\!\!\!\!x_s^*\leq E\left\{ I(h_s[t],P^*[t])K1_{\{c^*[t]=s\}}\right\}+\delta,~~~~~\\
&&\!\!\!\!\!\! \!\!\!\!\!\!E\{P^*[t]\}\leq P_{av}+\delta,
\end{eqnarray}
where $opt^*$ is the maximum network utility of problem \eqref{eq95}.
\end{lemma}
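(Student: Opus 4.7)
The plan is to follow the standard ``opportunistic scheduling'' reduction used in Lyapunov optimization: any point in the outer rate region $\Lambda_{out}$ can be approximated arbitrarily closely in expectation by a stationary randomized $\hat{\mathbf{h}}$-only policy, thanks to the i.i.d.\ nature of $\hat{\mathbf{h}}[t]$ and the boundedness of all decision variables. The argument naturally splits into a continuity/compactness reduction for the utility, followed by a ``flattening'' of an arbitrary causal policy into a stationary randomized one.

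First, I would use compactness and continuity to guarantee a near-maximizer. By \eqref{eq:rate-con1} the region $\Lambda_{out}$ is a bounded subset of $\prod_s [0, D_s]$, and it is closed since the defining inequalities \eqref{eq:rate-con}--\eqref{eq:power-con1} are preserved under limits of time-averages. Each $U_s$ is continuous, so there exists $\mathbf{x}^\dagger \in \Lambda_{out}$ with $\sum_s U_s(x_s^\dagger) \geq opt^* - \delta/3$. By definition of $\Lambda_{out}$, this point is achieved by some (possibly history-dependent and non-stationary) $\hat{\mathbf{h}}$-only control scheme $\{c^\sharp[t], P^\sharp[t]\}$ whose liminf/limsup time averages satisfy the corresponding rate and power inequalities relative to $\mathbf{x}^\dagger$.

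Second, I would convert $\{c^\sharp[t], P^\sharp[t]\}$ into a stationary randomized $\hat{\mathbf{h}}$-only policy. For a large window of length $T$, define the conditional empirical distribution of $(c^\sharp[\tau], P^\sharp[\tau])$ given $\hat{\mathbf{h}}[\tau] = \hat{\mathbf{h}}$, aggregated over $\tau = 0, \ldots, T-1$. Since the set of achievable expected-rate/expected-power vectors under all $\hat{\mathbf{h}}$-only stationary randomized policies is convex (by mixing) and closed in the compact product of the action space, and because the i.i.d.\ strong law identifies the time averages of $\{c^\sharp[t], P^\sharp[t]\}$ with expectations under the limit of the empirical conditional distributions, one can extract a limiting stationary randomized policy $\{\tilde{c}[t], \tilde{P}[t]\}$ whose expected rates and expected power match the time averages of $\{c^\sharp[t], P^\sharp[t]\}$ up to an arbitrarily small slack $\delta/3$. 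Setting $x_s^{*} = \min\{D_s,\,E\{I(h_s[t],\tilde{P}[t])K\mathbf{1}_{\{\tilde{c}[t]=s\}}\}\}$ and again using continuity of each $U_s$ then delivers all three conclusions with total slack at most $\delta$, while the hard constraints $0 \leq P^*[t] \leq P_{peak}$ and $0 \leq x_s^* \leq D_s$ are preserved automatically.

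The main obstacle is justifying the passage to a stationary randomized policy: the original scheme may depend on the entire history of $\hat{\mathbf{h}}$ and on auxiliary randomization, so the reduction is not literally an ``averaging over $t$''. I would handle this by extracting a Prokhorov-tight subsequence of the empirical conditional distributions (tightness follows from boundedness of $P$ and the finite scheduling alphabet), invoking the i.i.d.\ ergodic theorem to identify liminf time averages with expectations under the limit distribution, and finally applying Carath\'eodory's theorem to replace the (possibly complicated) randomization at each $\hat{\mathbf{h}}$ by a mixture of at most $S+2$ deterministic actions of the form $(c,P)$. Combining the three $\delta/3$ slacks produced by the near-maximizer selection, the empirical-to-stationary passage, and the continuity of $U_s$ yields the asserted bounds for an $\hat{\mathbf{h}}$-only stationary randomized $\{x_s^*,P^*[t],c^*[t]\}$.
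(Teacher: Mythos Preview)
Your approach is substantively different from the paper's, and while it is in the right spirit, it leaves an important step underdeveloped.

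The paper's argument is far more modular. It first introduces an auxiliary problem identical to \eqref{eq95} except that the sample-path liminf/limsup constraints in \eqref{eq:rate-con} and \eqref{eq:power-con} are replaced by the corresponding time-average \emph{expectation} constraints. It then shows, via Fatou's lemma applied to the nonnegative quantities $I(h_s[t],P[t])K\,1_{\{c[t]=s\}}$ and $P_{peak}-P[t]$, that any scheme feasible for \eqref{eq95} is also feasible for this expected-value problem, so $opt^*\le \hat{opt}$. Finally, it simply cites the standard stationary-randomized-policy theorem from Neely's framework (\cite[Theorems~4.5 and~5.2]{Neelybook10}) for the expected-value problem, which immediately produces the $\hat{\mathbf{h}}$-only policy with the required $\delta$-slack inequalities. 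No empirical measures, Prokhorov tightness, or Carath\'eodory reductions are needed.

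Your route attempts to re-derive that cited theorem from scratch while simultaneously handling the sample-path constraints, and this is where the sketch becomes shaky. The sentence ``the i.i.d.\ strong law identifies the time averages of $\{c^\sharp[t],P^\sharp[t]\}$ with expectations under the limit of the empirical conditional distributions'' is not quite right: the actions $c^\sharp[t],P^\sharp[t]$ are history-dependent and hence not i.i.d., so no strong law applies to them directly. What you actually need is precisely the Fatou step the paper uses, to pass from the sample-path liminf in \eqref{eq:rate-con} to an expectation before any empirical-distribution averaging makes sense; without it, the identification of the liminf time average with an expectation under a limiting conditional law is not justified. A second, smaller gap is your assertion that $\Lambda_{out}$ is closed; the definition involves the \emph{existence} of a control scheme satisfying liminf/limsup inequalities, and closure under pointwise limits of rate vectors does not follow automatically. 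The paper sidesteps both issues by working with the expected-value surrogate and then invoking the off-the-shelf result. Your construction can be made rigorous, but it would require inserting the Fatou argument explicitly and then carrying out the empirical-measure/tightness program carefully---at which point you have essentially reproved the cited theorem rather than used it.
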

\ifreport
\begin{proof}
See Appendix \ref{App:Stationary-policy}.
\end{proof}
\else
\begin{proof}
The proof of Lemma \ref{lem3} is provided in our technical report \cite{report_rateless2012}.
\end{proof}
\fi

\emph{Proof of Lemma
\ref{lem5}:}
By \eqref{eq86} and \eqref{eq73}, we can show that
\begin{eqnarray}
Z[t+1]^2-Z[t]^2\leq P_{peak}^2+P_{av}^2+2Z[t]P[t]-2Z[t]P_{av},\nonumber
\end{eqnarray}
and
\begin{eqnarray}
&&\!\!\!\!\!\!\!\!\!\!\!\!~~~Y_s[t+1]^2-Y_s[t]^2\nonumber\\
&&\!\!\!\!\!\!\!\!\!\!\!\!\leq I_{\max}^2K^21_{\{c[t]=s\}}\!+\! x_s[t]^2\!+\!2Y_s[t]x_s[t]\!-\!2Y_s[t]R_s[t]1_{\{a[t]=s\}} \nonumber\\
&&\!\!\!\!\!\!\!\!\!\!\!\!~~~-2Y_s[t]I(h_s[t],P[t])K1_{\{c[t]=s,a[t]=0\}}.\nonumber
\end{eqnarray}

In Algorithm NCA, if $c[t]=0$, we have $P[t] =0$. Otherwise, if $c[t]=s\geq1$, we can obtain $P[t]=P_s[t]$. This further suggests
$P[t]=\sum_{s=1}^SP_s[t]1_{\{c[t]=s\}}=\sum_{s=1}^SP_s[t](1_{\{c[t]=s,a[t]=0\}}+1_{\{a[t]=s\}})$.
Thus, the drift-plus-penalty can be expressed as
{ \small
\begin{eqnarray}\label{eq60}
&&\!\!\!\!\!\!\!\!\!\!\!\!E\bigg\{\Psi(Y_s[t\!+\!1],Z[t\!+\!1])\!-\!\Psi(Y_s[t],Z[t])\nonumber\\
&&~~~~~~~~~~~~~~-V\sum_{s=1}^SU_s(x_s[t])\bigg|Q_s[t],Z[t]\bigg\}\nonumber\\
&&\!\!\!\!\!\!\!\!\!\!\leq I_{\max}^2K^2+P_{peak}^2+P_{av}^2-2Z[t]P_{av}\nonumber\\
&&\!\!\!\!\!\!\!\!\!\!\!\!+\sum_{s=1}^SE\bigg\{\!-\!V U_s(x_s[t])\!+\!x_s[t]^2\!+\!2Y_s[t]x_s[t]\bigg|Q_s[t],Z[t]\bigg\}\nonumber\\
&&\!\!\!\!\!\!\!\!\!\!\!\!+E\bigg\{\sum_{s=1}^S1_{\{c[t]=s,a[t]=0\}}\big[-2Y_s[t]I(h_s[t],P_s[t])K\nonumber\\
&&\!\!\!\!\!\!\!\!\!\!\!\!~~~~~~+2Z[t]P_s[t]\big]\bigg|Q_s[t],Z[t]\bigg\}~~~~\nonumber\\
&&\!\!\!\!\!\!\!\!\!\!\!\!+E\bigg\{\sum_{s=1}^S1_{\{a[t]=s\}}\big[\!-\!2Y_s[t]R_s[t]
\!+\!2Z[t]P_s[t]\big]\bigg|Q_s[t],Z[t]\bigg\}.~~~~~
\end{eqnarray}}

Since $Q_s[t]-{M}_{\max}\leq Y_s[t] \leq Q_s[t]$, from \eqref{eq72}, we can obtain
\begin{eqnarray}\label{eq61}
&&~~E\{- V U_s(x_s[t])+x_s[t]^2+2Y_s[t]x_s[t]|Q_s[t],Z[t]\}\nonumber\\
&&\leq E\{- V U_s(x_s[t])+x_s[t]^2+2Q_s[t]x_s[t]|Q_s[t],Z[t]\}\nonumber\\
&&\leq E\{- V U_s(x_s^*)+{x_s^*}^2+2Q_s[t]x_s^*|Q_s[t],Z[t]\}\nonumber\\
&&\leq D_s^2- V U_s(x_s^*)+2Q_s[t]x_s^*.
\end{eqnarray}

If Algorithm NCA yields $c[t]=0$ and $P[t]=0$ in slot $t$, we have
{ \small
\begin{eqnarray}\label{eq62}
&&\!\!\!\!\!\!\!\!\!\!\!\!0\overset{}{\leq} E\{2Q_{\varsigma[t]}[t]E_{\textbf{h}}\{I(h_{\varsigma[t]}[t],P_{\varsigma[t]}[t])K|\hat{h}_{\varsigma[t]}[t]\}|Q_s[t],Z[t]\}\nonumber\\
&&\!\!\!\!\!\!\!\!\!\!\!\!~~~~~+\!E\bigg\{1_{\{c[t]=0\}}\sum_{s=1}^S1_{\{\varsigma[t]=s\}}\big[\!-\!2Q_s[t]K\nonumber\\
&&\!\!\!\!\!\!\!\!\!\!\!\!~~~~~\times E_{\textbf{h}}\{I(h_s[t],P_s[t])|\hat{h}_s[t]\}+2Z[t]P_s[t]\big]\bigg|Q_s[t],Z[t]\bigg\}\nonumber\\
&&\!\!\!\!\!\!\!\!\!\!\!\!~\overset{(a)}{\leq}2Q_{\varsigma[t]}[t]KE\{E_{\textbf{h}}\{I(h_{\varsigma[t]}[t],\varepsilon)|\hat{h}_{\varsigma[t]}[t]\}|Q_s[t],Z[t]\}\nonumber\\
&&\!\!\!\!\!\!\!\!\!\!\!\!~~~~~+\!E\bigg\{1_{\{c[t]=0\}}\sum_{s=1}^S1_{\{c^*[t]=s\}}\big[\!-\!2Q_s[t]K\nonumber\\
&&\!\!\!\!\!\!\!\!\!\!\!\!~~~~~\times E_{\textbf{h}}\{I(h_s[t],P^*[t])|\hat{h}_s[t]\}+2Z[t]P^*[t]\big]\bigg|Q_s[t],Z[t]\bigg\}\nonumber\\
&&\!\!\!\!\!\!\!\!\!\!\!\!~\overset{(b)}{\leq} V\max_s\{b_s\}K C\varepsilon +\!E\bigg\{1_{\{c[t]=0\}}\sum_{s=1}^S1_{\{c^*[t]=s\}}\big[\!-\!2Q_s[t]K\nonumber\\
&&\!\!\!\!\!\!\!\!\!\!\!\!~~~~~\times E_{\textbf{h}}\{I(h_s[t],P^*[t])|\hat{h}_s[t]\}+2Z[t]P^*[t]\big]\bigg|Q_s[t],Z[t]\bigg\},~~~~~~~
\end{eqnarray}}
$\!\!\!$where step $(a)$ is due to $P_{\varsigma[t]}[t]<\varepsilon$ and \eqref{eq96}-\eqref{eq78}, step $(b)$ is due to $Q_{s}[t]\leq b_{s}V/2$ in \eqref{eq75} and $E_{\textbf{h}}\{I(h_{s}[t],\varepsilon)|\hat{h}_s[t]\}\leq C\varepsilon$ by \eqref{eq1}-\eqref{eq114} and the concavity of $I(h_{s},P)$.

Similarly, if $c[t]\geq1$ and $a[t]=0$, we attain
{ \small
\begin{eqnarray}\label{eq63}
&&\!\!\!\!\!\!\!\!\!\!\!\!~~~E\bigg\{\sum_{s=1}^S1_{\{c[t]=s,a[t]=0\}}\big[-2Y_s[t]I(h_s[t],P_s[t])K\nonumber\\
&&\!\!\!\!\!\!\!\!\!\!\!\!~~~~~~~+2Z[t]P_s[t]\big]\bigg|Q_s[t],Z[t]\bigg\}\nonumber\\
&&\!\!\!\!\!\!\!\!\!\!\!\!\overset{(a)}{\leq} E\bigg\{1_{\{c[t]\geq1,a[t]=0\}}\sum_{s=1}^S1_{\{\varsigma[t]=s\}}\big[2M_{\max}I_{\max}K-2Q_s[t]K\nonumber\\
&&\!\!\!\!\!\!\!\!\!\!\!\!~~~\times E_{\textbf{h}}\{I(h_s[t],P_s[t])|\hat{h}_s[t]\}+2Z[t]P_s[t]\big]\bigg|Q_s[t],Z[t]\bigg\}\nonumber\\
&&\!\!\!\!\!\!\!\!\!\!\!\!\overset{(b)}{\leq} 2M_{\max}I_{\max}K+E\bigg\{1_{\{c[t]\geq1,a[t]=0\}}\sum_{s=1}^S1_{\{c^*[t]=s\}}\big[-2Q_s[t]K\nonumber\\
&&\!\!\!\!\!\!\!\!\!\!\!\!~~~\times E_{\textbf{h}}\{I(h_s[t],P^*[t])|\hat{h}_s[t]\}+2Z[t]P^*[t]\big]\bigg|Q_s[t],Z[t]\bigg\},
\end{eqnarray}}
$\!\!\!$where step $(a)$ is due to $Q_s[t]-{M}_{\max}\leq Y_s[t]$, $I(h_s,P_{peak})\leq I_{\max}$, and $c[t]=\varsigma[t]$, and step $(b)$ is due to \eqref{eq96} and \eqref{eq78}.

If $c[t]=a[t]\geq1$, the last term of \eqref{eq60} satisfies
{ \small
\begin{eqnarray}\label{eq64}
&&\!\!\!\!\!\!\!\!\!\!~~~E\bigg\{\sum_{s=1}^S1_{\{a[t]=s\}}\big[\!-\!2Y_s[t]R_s[t]\!+\!2Z[t]P_s[t]\big]\bigg|Q_s[t],Z[t]\bigg\}\nonumber\\
&&\!\!\!\!\!\!\!\!\!\!\leq E\bigg\{\sum_{s=1}^S1_{\{a[t]=s\}}2Z[t]P_s[t]\bigg|Q_s[t],Z[t]\bigg\}\nonumber
\end{eqnarray}\begin{eqnarray}
&&\!\!\!\!\!\!\!\!\!\!=\!E\bigg\{\sum_{s=1}^S1_{\{a[t]=s\}}\big[2Q_{s}[t]E_{\textbf{h}}\{I(h_{s}[t],P_{s}[t])|\hat{h}_{s}[t]\}K\nonumber\\
&&\!\!\!\!\!\!\!\!\!\!~~~-\!2Q_s[t]E_{\textbf{h}}\{I(h_s[t],P_s[t])|\hat{h}_s[t]\}K+2Z[t]P_s[t]\big]\bigg|Q_s[t],Z[t]\bigg\}\nonumber\\
&&\!\!\!\!\!\!\!\!\!\!\overset{(a)}{=}\!E\bigg\{1_{\{a[t]\geq1\}}\sum_{s=1}^S1_{\{\varsigma[t]=s\}}\big[2Q_s[t]K E_{\textbf{h}}\{I(h_s[t],P_s[t])|\hat{h}_s[t]\}\nonumber\\
&&\!\!\!\!\!\!\!\!\!\!~~~-\!2Q_s[t]K E_{\textbf{h}}\{I(h_s[t],P_s[t])|\hat{h}_s[t]\}+2Z[t]P_s[t]\big]\bigg|Q_s[t],Z[t]\bigg\}\nonumber\\
&&\!\!\!\!\!\!\!\!\!\!\overset{(b)}{\leq} 2KE\bigg\{1_{\{a[t]\geq1\}}\max_s\big[Q_{s}[t]E_{\textbf{h}}\{I(h_{s}[t],P_s[t])|\hat{h}_s[t]\}\big]\bigg|\nonumber\\
&&\!\!\!\!\!\!\!\!\!\!~~~Q_s[t],Z[t]\bigg\}+\!E\bigg\{1_{\{a[t]\geq1\}}\sum_{s=1}^S1_{\{c^*[t]=s\}}\big[\!-\!2Q_s[t]K\nonumber\\
&&\!\!\!\!\!\!\!\!\!\!~~~\times E_{\textbf{h}}\{I(h_s[t],P^*[t])|\hat{h}_s[t]\}+2Z[t]P^*[t]\big]\bigg|Q_s[t],Z[t]\bigg\}\nonumber\\
&&\!\!\!\!\!\!\!\!\!\!\overset{(c)}{\leq} V\max_{s}\{b_s\}KI_{\max}E\{1_{\{a[t]\geq1\}}|Q_s[t],Z[t]\}\nonumber\\
&&\!\!\!\!\!\!\!\!\!\!~~~+\!E\bigg\{1_{\{a[t]\geq1\}}\sum_{s=1}^S1_{\{c^*[t]=s\}}\big[\!-\!2Q_s[t]K\nonumber\\
&&\!\!\!\!\!\!\!\!\!\!~~~\times E_{\textbf{h}}\{I(h_s[t],P^*[t])|\hat{h}_s[t]\}+2Z[t]P^*[t]\big]\bigg|Q_s[t],Z[t]\bigg\},~~~~~~
\end{eqnarray}}
$\!\!\!$where step $(a)$ is due to $c[t]=a[t]=\varsigma[t]$, step $(b)$ is due to \eqref{eq96}-\eqref{eq78}, and step $(c)$ is due to $I(h_s,P_{peak})\leq I_{\max}$ and $Q_{s}[t]\leq b_{s}V/2$ in \eqref{eq75}.

Taking the summation of the last terms in \eqref{eq62}-\eqref{eq64}, we obtain
{ \small
\begin{eqnarray}\label{eqA1}
&&\!\!\!\!\!\!\!\!\!\!\!\!\!\!E\bigg\{\!(1_{\{c[t]=0\}}\!+\!1_{\{c[t]\geq1,a[t]=0\}}\!+\!1_{\{a[t]\geq1\}})\sum_{s=1}^S1_{\{c^*[t]=s\}}\big[\!-\!2Q_s[t]K\nonumber\\
&&\!\!\!\!\!\!\!\!\!\!\!\!\!\!~~~\times E_{\textbf{h}}\{I(h_s[t],P^*[t])|\hat{h}_s[t]\}+2Z[t]P^*[t]\big]\bigg|Q_s[t],Z[t]\bigg\}\nonumber\\
&&\!\!\!\!\!\!\!\!\!\!\!\!\!\!\overset{(a)}{=}E\bigg\{\sum_{s=1}^S1_{\{c^*[t]=s\}}\big[\!-\!2Q_s[t]KE_{\textbf{h}}\{I(h_s[t],P^*[t])|\hat{h}_s[t]\}\nonumber\\
&&\!\!\!\!\!\!\!\!\!\!\!\!\!\!~~~+2Z[t]P^*[t]\big]\bigg|Q_s[t],Z[t]\bigg\}\nonumber\\
&&\!\!\!\!\!\!\!\!\!\!\!\!\!\!\overset{(b)}{=}\!-\!2Q_s[t]\!\sum_{s=1}^S\!E\{I(h_s[t],P^*[t])K1_{\{c^*[t]=s\}}\}\!+\!2Z[t]E\{P^*[t]\}\!,
\end{eqnarray}}
$\!\!\!$where step $(a)$ is due to $1_{\{c[t]=0\}}\!+\!1_{\{c[t]\geq1,a[t]=0\}}\!+\!1_{\{a[t]\geq1\}}=1$, step $(b)$ is due to the fact that $\{{h}_s[t],\hat{h}_s[t]\}$ and the stationary and randomized control scheme $\{P^*[t],c^*[t]\}$ are independent of $Q_s[t],Z[t]$.

By substituting \eqref{eq61}-\eqref{eqA1} back to \eqref{eq60}, and invoking Lemma \ref{lem3} with $\delta\rightarrow0$, we can derive
{ \small
\begin{eqnarray}\label{eq67}
&&\!\!\!\!E\bigg\{\Psi(Y_s[t\!+\!1],Z[t\!+\!1])\!-\!\Psi(Y_s[t],Z[t])\nonumber\\
&&~~~~~~~~-V\sum_{s=1}^SU_s(x_s[t])\bigg|Q_s[t],Z[t]\bigg\}\nonumber\\
&&\!\!\!\!\!\!\!\!\!\!\leq I_{\max}^2K^2+P_{peak}^2+P_{av}^2+\sum_{s=1}^SD_s^2-V opt^*\nonumber\\
&&\!\!\!\!+2{M}_{\max}I_{\max}K +V\max_s\{b_s\} CK\varepsilon\nonumber\\
&&\!\!\!\!+V\max_{s}\{b_s\} KI_{\max}E\{1_{\{a[t]\geq1\}}|Q_s[t],Z[t]\}\nonumber
\end{eqnarray}\begin{eqnarray}
&&\!\!\!\!\!\!\!\!\!\!=  VB_1\varepsilon +VB_2 E\{1_{\{a[t]\geq1\}}|Q_s[t],Z[t]\}+B_3-V opt^*,\nonumber
\end{eqnarray}}
and the asserted statement is proved.

\section{Proof of \emph{Theorem
\ref{The2}}}\label{App2}
According to Theorem \ref{thm3}, \eqref{eq112} holds with probability 1. Thus, we have
\begin{eqnarray}
\lim_{T\rightarrow\infty}\frac{1}{T} \sum_{t=0}^{T-1}\left[1_{\{c[t]=s\}}-L_{av}1_{\{a[t]=s\}}\right] = 0. ~~\textrm{(w.p.1)}
\end{eqnarray}
By taking the summation over $s$, we obtain
\begin{eqnarray}\label{eq65}
\lim_{T\rightarrow\infty}\frac{1}{T} \sum_{t=0}^{T-1}\left[1_{\{c[t]\geq1\}}-L_{av}1_{\{a[t]\geq1\}}\right] = 0.~~\textrm{(w.p.1)}
\end{eqnarray}
On the other hand, it is obviously that
\begin{eqnarray}\label{eq66}
\limsup_{T\rightarrow\infty}\frac{1}{T} \sum_{t=0}^{T-1}1_{\{c[t]\geq1\}}\leq1.~~\textrm{(w.p.1)}
\end{eqnarray}
From \eqref{eq65} and \eqref{eq66}, we obtain
\begin{eqnarray}
&&\!\!\!\!\!\!\!\!\!\!\!\!~~~\limsup_{T\rightarrow\infty}\frac{1}{T} \sum_{t=0}^{T-1}1_{\{a[t]\geq1\}}\nonumber\\
&&\!\!\!\!\!\!\!\!\!\!\!\!=\limsup_{T\rightarrow\infty}\frac{1}{T} \sum_{t=0}^{T-1}\left[1_{\{a[t]\geq1\}}\!-\!\frac{1}{L_{av}}1_{\{c[t]\geq1\}}\!+\!\frac{1}{L_{av}}1_{\{c[t]\geq1\}}\right]\nonumber\\
&&\!\!\!\!\!\!\!\!\!\!\!\!\leq\limsup_{T\rightarrow\infty}\frac{1}{T} \sum_{t=0}^{T-1}\frac{1}{L_{av}}1_{\{c[t]\geq1\}}\nonumber\\
&&\!\!\!\!\!\!\!\!\!\!\!\!\leq\frac{1}{L_{av}}.~~\textrm{(w.p.1)}
\end{eqnarray}

On the other hand, we have
\begin{eqnarray}\label{eq:fatou}
&&\!\!\!\!\!\!\!\!\!\!\!\!~~~\limsup_{T\rightarrow\infty}E\left\{\frac{1}{T} \sum_{t=0}^{T-1}1_{\{a[t]\geq1\}}\right\}-1\nonumber\\
&&\!\!\!\!\!\!\!\!\!\!\!\!=-\liminf_{T\rightarrow\infty}E\left\{1-\frac{1}{T} \sum_{t=0}^{T-1}1_{\{a[t]\geq1\}}\right\}\nonumber\\
&&\!\!\!\!\!\!\!\!\!\!\!\!\overset{(a)}{\leq}-E\left\{\liminf_{T\rightarrow\infty}\left[1-\frac{1}{T} \sum_{t=0}^{T-1}1_{\{a[t]\geq1\}}\right]\right\}\nonumber\\
&&\!\!\!\!\!\!\!\!\!\!\!\!=E\left\{\limsup_{T\rightarrow\infty}\left[\frac{1}{T} \sum_{t=0}^{T-1}1_{\{a[t]\geq1\}}\right]\right\}-1\nonumber\\
&&\!\!\!\!\!\!\!\!\!\!\!\!\leq\frac{1}{L_{av}}-1,
\end{eqnarray}
where step $(a)$ follows from $1-\frac{1}{T} \sum_{t=0}^{T-1}1_{\{a[t]\geq1\}}\geq0$ and Fatou's lemma \cite[Theorem 1.5.4]{Durrettbook10}. Hence, we derive
\begin{eqnarray}\label{eq:fatou1}
\limsup_{T\rightarrow\infty}E\left\{\frac{1}{T} \sum_{t=0}^{T-1}1_{\{a[t]\geq1\}}\right\}\leq\frac{1}{L_{av}}.
\end{eqnarray}

Taking the summation of \eqref{eq53} over $t$ and expectation over $\{Q_s[t],Z[t]\}$ yields that
\begin{eqnarray}
&&\!\!\!\!E\bigg\{\Psi(Q_s[T],Z[T])\!-\!\Psi(Q_s[0],Z[0])\nonumber\\
&&~~~~~~~~~~~~~~~~~~-V\sum_{t=0}^{T-1}\sum_{s=1}^SU_s(x_s[t])\bigg\}\nonumber\\
&&\!\!\!\!\!\!\!\!\!\!\leq VB_1T\varepsilon +VB_2 E\!\bigg\{\!\sum_{t=0}^{T-1}1_{\{a[t]\geq1\}}\!\bigg\}\!+\!B_3T\!-\!VT \sum_{s=1}^SU_s(x_s^*).\nonumber\\
\end{eqnarray}

Therefore, we attain
\begin{eqnarray}\label{eq68}
&&\!\!\!\!\!\!\!\!\!\!~~~ \liminf_{T\rightarrow\infty}\sum_{t=0}^{T-1} \frac{1}{T}\sum_{s=1}^S E\{U_s(x_s[t])\}\nonumber\\
&&\!\!\!\!\!\!\!\!\!\!\geq \sum_{s=1}^SU_s(x_s^*) \!-\! B_1\varepsilon\!-\!B_2\limsup_{T\rightarrow\infty}E\bigg\{\frac{1}{T} \sum_{t=0}^{T-1}1_{\{a[t]\geq1\}}\bigg\}\!-\! \frac{B_3}{V}\nonumber\\
&&\!\!\!\!\!\!\!\!\!\!\geq \sum_{s=1}^SU_s(x_s^*) - B_1\varepsilon-\frac{B_2}{L_{av}}- \frac{B_3}{V},
\end{eqnarray}
By using \cite[Proposition 6.1]{Neely2010} or \cite[Theorem 4.4]{Neelybook10}, we can show that \begin{eqnarray}\label{eq68}
&&\!\!\!\!\!\!\!~~~ \liminf_{T\rightarrow\infty}\sum_{t=0}^{T-1} \frac{1}{T}\sum_{s=1}^S U_s(x_s[t])\nonumber\\
&&\!\!\!\!\!\!\!\geq \sum_{s=1}^SU_s(x_s^*) - B_1\varepsilon-\frac{B_2}{L_{av}}- \frac{B_3}{V}.~~\textrm{(w.p.1)}
\end{eqnarray}
By the concavity of $U_s(\cdot)$, the asserted statement is proved.

\section{Proof of \emph{Lemma \ref{lem3}}}\label{App:Stationary-policy}
We start with a problem that is similar with \eqref{eq95}:
\begin{eqnarray}\label{eqproof1}
\max_{x_s} && \!\!\!\! \!\!\!\!\!\!\sum_{s=1}^SU_s(x_s)\\
\textrm{s.t.}~ && \!\!\!\! \!\!\!\!\!\!x_s\leq \liminf_{T\rightarrow\infty}\frac{1}{T}\sum_{t=0}^{T-1}E\left\{ I(h_s[t],P[t])K1_{\{c[t]=s\}}\right\}\!\!,\nonumber\\
&&\!\!\!\! \!\!\!\!\!\!0\leq x_s\leq D_s,\nonumber\\
&&\!\!\!\! \!\!\!\!\!\!\limsup_{T\rightarrow\infty}\frac{1}{T}\sum_{t=0}^{T-1}E\{P[t]\}\leq P_{av},\nonumber\\
&&\!\!\!\! \!\!\!\!\!\!0\leq P[t]\leq P_{peak}.\nonumber
\end{eqnarray}
The different between problem \eqref{eq95} and problem \eqref{eqproof1} is: the time-averages in problem  \eqref{eq95} is replaced by time-average expectations in problem \eqref{eqproof1}. We show that problem \eqref{eqproof1} provides a utility upper bound for problem \eqref{eq95}.

Let us consider an network control scheme $\{x_s,P[t],c[t]\}$ that achieves the optimal network utility $opt^*$ of problem \eqref{eq95}. Then, the scheme $\{x_s,P[t],c[t]\}$ must satisfy the following constraints with probability 1:
\begin{eqnarray}
&&\!\!\!\! \!\!\!\!\!\!x_s\leq \liminf_{T\rightarrow\infty}\frac{1}{T}\sum_{t=0}^{T-1}\left[ I(h_s[t],P[t])K1_{\{c[t]=s\}}\right]\!\!,\label{eq:Appeq2}\\
&&\!\!\!\! \!\!\!\!\!\!0\leq x_s\leq D_s,\\
&&\!\!\!\! \!\!\!\!\!\!\limsup_{T\rightarrow\infty}\frac{1}{T}\sum_{t=0}^{T-1}P[t]\leq P_{av},\label{eq:Appeq3}\\
&&\!\!\!\! \!\!\!\!\!\!0\leq P[t]\leq P_{peak}.\label{eq:Appeq1}
\end{eqnarray}

By taking the expectations on both sides of \eqref{eq:Appeq2} and \eqref{eq:Appeq3}, we derive
\begin{eqnarray}
&&\!\!\!\! \!\!\!\!\!\!x_s\leq E\left\{\liminf_{T\rightarrow\infty}\frac{1}{T}\sum_{t=0}^{T-1}\left[ I(h_s[t],P[t])K1_{\{c[t]=s\}}\right]\right\},\label{eq:Appeq4}\\
&&\!\!\!\! \!\!\!\!\!\!E\left\{\limsup_{T\rightarrow\infty}\frac{1}{T}\sum_{t=0}^{T-1}P[t]\right\}\leq P_{av}.\label{eq:Appeq5}
\end{eqnarray}
According to Fatou's lemma \cite[Theorem 1.5.4]{Durrettbook10}, we have
\begin{eqnarray}
&&x_s\leq E\left\{\liminf_{T\rightarrow\infty}\frac{1}{T}\sum_{t=0}^{T-1}\left[ I(h_s[t],P[t])K1_{\{c[t]=s\}}\right]\right\}\nonumber\\
&&~~~\leq \liminf_{T\rightarrow\infty}\frac{1}{T}\sum_{t=0}^{T-1}E\left\{\left[ I(h_s[t],P[t])K1_{\{c[t]=s\}}\right]\right\}.
\end{eqnarray}
On the other hand, by \eqref{eq:Appeq1} and following the steps in \eqref{eq:fatou}-\eqref{eq:fatou1}, we obtain
\begin{eqnarray}
\limsup_{T\rightarrow\infty}\frac{1}{T}\sum_{t=0}^{T-1}E\left\{P[t]\right\}\leq E\left\{\limsup_{T\rightarrow\infty}\frac{1}{T}\sum_{t=0}^{T-1}P[t]\right\}\leq P_{av}.
\end{eqnarray}
Therefore, the optimal network control scheme $\{x_s,P[t],c[t]\}$ of problem \eqref{eq95} is also feasible for problem \eqref{eqproof1}. By this, the optimal network utility of problem \eqref{eq95} is upper bounded by that of problem \eqref{eqproof1}.

On the other hand, it is known that the optimal network utility of problem \eqref{eqproof1} can be achieved arbitrarily closely by an $\hat{\textbf{h}}$-only stationary and randomized control scheme:

\begin{lemma} \cite[Theorem 4.5 and 5.2]{Neelybook10}
Suppose the $\{{\textbf{h}}[t],\hat{\textbf{h}}[t]\}$ process is i.i.d. across time, and
the system satisfies the boundedness assumptions \eqref{eq1}, \eqref{eq:rate-con1}, and \eqref{eq:power-con1}. If the problem \eqref{eqproof1} has a feasible solution, then for any $\delta>0$ there is an $\hat{\textbf{h}}-$only stationary and randomized control scheme $\{x_s^*,P^*[t],c^*[t]\}$ that satisfies $0\leq P^*[t]\leq P_{peak}$, $0\leq x_s^*\leq D_{s}$, and
\begin{eqnarray}\label{eq:Appeq6}
&&\!\!\!\!\!\! \!\!\!\!\!\!\hat{opt}\leq \sum_{s=1}^SU_s(x_s^*)+\delta,\\
&&\!\!\!\!\!\! \!\!\!\!\!\!x_s^*\leq E\left\{ I(h_s[t],P^*[t])K1_{\{c^*[t]=s\}}\right\}+\delta,~~~~~\\
&&\!\!\!\!\!\! \!\!\!\!\!\!E\{P^*[t]\}\leq P_{av}+\delta,
\end{eqnarray}
where $\hat{opt}$ is the maximum network utility of problem \eqref{eqproof1}.
\end{lemma}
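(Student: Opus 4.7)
The plan is to specialize the standard framework for i.i.d.\ constrained Markov decision problems (Neely [Neelybook10, Ch.~4--5]) to problem \eqref{eqproof1}. The structural point is that, because $\{\textbf{h}[t], \hat{\textbf{h}}[t]\}$ is i.i.d., the optimum $\hat{opt}$ of \eqref{eqproof1} taken over arbitrary history-dependent feasible policies is matched arbitrarily closely by $\hat{\textbf{h}}$-only stationary randomized policies, so it suffices to exhibit one whose utility lies within $\delta$ of $\hat{opt}$ under the $\delta$-relaxed rate and power constraints of \eqref{eq:Appeq6}.

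I would first define the set $\mathcal{R} \subset \mathbb{R}^{S+1}$ of vectors $(r_1, \ldots, r_S, p)$ with $r_s = E\{I(h_s[t], P^*[t]) K 1_{\{c^*[t]=s\}}\}$ and $p = E\{P^*[t]\}$ that are achievable by some $\hat{\textbf{h}}$-only stationary randomized policy $\{P^*[t], c^*[t]\}$. Through an auxiliary time-sharing coin flip between two such policies, $\mathcal{R}$ is convex; by $P^* \in [0, P_{peak}]$ and $I(h_s, P) \leq I_{\max}$ from \eqref{eq1}, its closure $\overline{\mathcal{R}}$ is compact in $\mathbb{R}^{S+1}$.

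Next, I would show that every policy $\{P[t], c[t]\}$ feasible for \eqref{eqproof1} has its long-run time-average expectations dominated componentwise by some element of $\overline{\mathcal{R}}$. The construction centers on the time-averaged conditional action distribution $\bar{\nu}_T(\cdot \,|\, \hat{\textbf{h}}) = \tfrac{1}{T} \sum_{t=0}^{T-1} \Pr((P[t], c[t]) \in \cdot \,|\, \hat{\textbf{h}}[t] = \hat{\textbf{h}})$. Since $\hat{\textbf{h}}[t]$ has a fixed marginal and the action set $[0, P_{peak}] \times \{0, 1, \ldots, S\}$ is compact, Prohorov's theorem yields a subsequence along which the joint law of $(\hat{\textbf{h}}, P, c)$ under $\bar{\nu}_T$ converges weakly to a law that disintegrates as $f_{\hat{\textbf{h}}}(d\hat{\textbf{h}})\, \nu^*(dP, dc \,|\, \hat{\textbf{h}})$. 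Sampling $(P^*[t], c^*[t]) \sim \nu^*(\cdot \,|\, \hat{\textbf{h}}[t])$ defines an $\hat{\textbf{h}}$-only stationary randomized policy; using conditional independence of $h_s[t]$ from the past given $\hat{\textbf{h}}[t]$ together with boundedness and continuity of $I(\cdot, \cdot)$, its per-slot expectations match the subsequential limit of the original time-averaged expectations.

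With these two steps, problem \eqref{eqproof1} reduces to a concave maximization of $\sum_s U_s(x_s)$ over the compact convex set $\{(x_s, r_s, p): (r_1, \ldots, r_S, p) \in \overline{\mathcal{R}},\, x_s \leq r_s,\, 0 \leq x_s \leq D_s,\, p \leq P_{av}\}$, whose supremum is precisely $\hat{opt}$. Given any $\delta > 0$, I would select a feasible point in $\mathcal{R}$ (not just its closure) whose utility is within $\delta$ of the maximizer and whose rate/power components are at most $\delta$ away from the constraint boundary, producing the $\hat{\textbf{h}}$-only stationary randomized policy $\{x_s^*, P^*[t], c^*[t]\}$ that satisfies all three inequalities of \eqref{eq:Appeq6}. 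The main obstacle is the second step: constructing the regular conditional kernel $\nu^*$ and justifying the weak-convergence passage to the limit when $\hat{\textbf{h}}$ is continuously distributed, which requires tightness of the joint laws and a disintegration-of-measure argument; once that is in place, the compactness/concavity reasoning of the final paragraph is routine.
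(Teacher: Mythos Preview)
The paper does not prove this lemma; it states it with attribution to Theorems~4.5 and~5.2 of Neely's textbook and invokes it as a black box in the derivation of Lemma~\ref{lem3}. Your proposal therefore goes beyond the paper by supplying an actual proof sketch, and the sketch is essentially correct: define the one-shot achievable region for $\hat{\textbf{h}}$-only randomized rules, show it is convex with compact closure, argue that the long-run time-averaged expectations of any admissible policy fall inside this closure, and then maximize the concave utility over that compact convex set.

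The one place where your argument is heavier than necessary is the second step. You invoke Prohorov's theorem and a disintegration-of-measure argument to construct the limiting kernel $\nu^*$, but this can be bypassed entirely. Because $\{(\textbf{h}[t],\hat{\textbf{h}}[t])\}$ is i.i.d., for any history-dependent policy the conditional law of $(P[t],c[t])$ given $\hat{\textbf{h}}[t]$ alone (marginalizing over the past) already defines an $\hat{\textbf{h}}$-only randomized rule; and since $\textbf{h}[t]$ is conditionally independent of the past given $\hat{\textbf{h}}[t]$, this rule reproduces the time-$t$ expectation vector $(E\{I(h_s[t],P[t])K1_{\{c[t]=s\}}\}_s,\,E\{P[t]\})$ exactly. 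Hence every per-slot expectation vector already lies in $\mathcal{R}$, so any Ces\`aro average lies in its convex hull and the $\liminf/\limsup$ in $\overline{\mathcal{R}}$ --- no weak-convergence passage is needed. This is the route taken in Neely's book (via Carath\'eodory in the finite-state case; the continuous case is identical once one observes the per-slot containment). With this simplification your final compactness/concavity paragraph goes through unchanged.
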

Since we have already show that the optimal network utility of problem \eqref{eq95} is upper bounded by that of problem \eqref{eqproof1}, i.e.,
\begin{eqnarray}\label{eq:Appeq7}
{opt}^*\leq \hat{opt},
\end{eqnarray}
the asserted statement follows from \eqref{eq:Appeq6}-\eqref{eq:Appeq7}.

\bibliographystyle{IEEEtran}
\bibliography{CR_relay_11}

\end{document}